\documentclass[runningheads]{llncs}

\usepackage[utf8]{inputenc}
\usepackage[table]{xcolor} 
\usepackage{amsmath}
\usepackage{amssymb}
\usepackage{adjustbox}
\usepackage{pdfpages}
\usepackage{caption}
\usepackage[ruled,vlined,linesnumbered]{algorithm2e}

\newcommand{\rememberlines}{\xdef\rememberedlines{\number\value{AlgoLine}}}
\newcommand{\resumenumbering}{\setcounter{AlgoLine}{\rememberedlines}}

\usepackage{upgreek}
\usepackage{latexsym }
\usepackage{ stmaryrd }
\newcommand{\comment}[1]{}
\usepackage{centernot,cancel}
\usepackage[toc,page]{appendix}
\usepackage{tabto}
\usepackage{comment}
\usepackage{multirow} 
\usepackage{paralist}
\usepackage{todonotes}

\usepackage{wrapfig} 

\author{}
\authorrunning{}
\institute{}

\newcommand{\Thanks}{\thanks{This work was funded in part by the Madrid Regional Gov.
  Project ``S2018/TCS-4339 (BLOQUES-CM)'', by PRODIGY Project
  (TED2021-132464B-I00) funded by MCIN/AEI/10.13039/501100011033/ and
  the European Union Next Generation EU/PRTR, and by a research grant
  from Nomadic Labs and the Tezos Foundation.}}

\title{Boolean Abstractions \\ for Realizability Modulo Theories \\ (Extended version) \Thanks}

\usepackage{orcidlink}

\author{Andoni Rodríguez \inst{1,2} \orcidlink{0009-0006-3464-8667} \and César Sánchez \inst{1} \orcidlink{0000-0003-3927-4773}}
\institute{
     IMDEA Software Institute, Madrid. Spain
     \and
     Universidad Politécnica de Madrid. Spain
}


\usepackage{graphicx}
\usepackage{amssymb} 
\usepackage{ltlfonts}
\usepackage{comment}

\usepackage{hyperref}
 \hypersetup{%
   colorlinks=true,           
   allcolors=blue!70!black,   
   pdfstartview=Fit,          
   breaklinks=true,
   pdfauthor={},
   pdftitle={Boolean Abstractions for Realizability Modulo Theories}}

\newcommand{\KWD}[1]{\textit{#1}}

\newcommand{\True}{\KWD{true}}
\newcommand{\False}{\KWD{false}}

\newcommand{\DefinedAs}{\,\stackrel{\text{def}}{=}\,}


\newcommand{\AP}{\ensuremath{\mathsf{AP}}\xspace}

\newcommand{\DefOR}{\ensuremath{\hspace{0.2em}\big|\hspace{0.2em}}}

\newcommand{\Always}{\LTLsquare}
\newcommand{\Event}{\LTLdiamond} 
\newcommand{\Next}{\LTLcircle}

\newcommand{\U}{\mathbin{\mathcal{U}}}

\renewcommand{\And}{\mathrel{\wedge}}
\newcommand{\Or}{\mathrel{\vee}}
\newcommand{\Impl}{\mathrel{\rightarrow}}
\newcommand{\Into}{\Impl}

\newcommand{\LTLt}{\ensuremath{\textup{LTL}_{\calT}}\xspace}
\newcommand{\LTLB}{\ensuremath{\textup{LTL}_{\mathbb{B}}}\xspace}

\newcommand{\configuration}{choice\xspace}
\newcommand{\configurations}{choices\xspace}

\newcommand{\reactions}{reactions\xspace}
\newcommand{\xbar}{\overline{x}}

\newcommand{\Lit}{\ensuremath{\mathit{Lit}}\xspace}
\newcommand{\phiT}{\ensuremath{\varphi_{\mathcal{T}}}\xspace}
\newcommand{\phiB}{\ensuremath{\varphi_{\mathbb{B}}}\xspace}
\newcommand{\phiExtra}{\varphi^{\textit{extra}}}
\newcommand{\phiEx}{\ensuremath{\phiExtra}\xspace}
\newcommand{\xs}{\ensuremath{\overline{x}}\xspace}
\newcommand{\ys}{\ensuremath{\overline{y}}\xspace}

\newcommand{\PT}[1]{#1^p}
\newcommand{\NT}[1]{#1^a}

\newcommand{\mycal}[1]{\ensuremath{\mathcal{#1}}\xspace}
\newcommand{\calC}{\mycal{C}}
\newcommand{\react}{\ensuremath{\textit{react}}}
\newcommand{\calR}{\mycal{R}}
\newcommand{\calT}{\mycal{T}}
\newcommand{\calQ}{\mycal{Q}}
\newcommand{\VR}{\ensuremath{\textit{VR}}\xspace}

\newcommand{\VE}{\ensuremath{\overline{v}_e}}
\newcommand{\VS}{\ensuremath{\overline{v}_s}}

\newcommand{\ThZ}{\mathcal{T}_\mathbb{Z}}
\newcommand{\ThR}{\mathcal{T}_\mathbb{R}}

\newcommand{\GameT}{\mathcal{G}^{\calT}}
\newcommand{\GameB}{\mathcal{G}^{\mathbb{B}}}

\newcommand{\qrprec}{\mathrel{\preceq}}

\newcommand{\qreact}{\ensuremath{\textit{qreact}}}
\newcommand{\Vars}{\mathit{Vars}}
\newcommand{\Bool}{\mathbb{B}}
\newcommand{\Part}{\rightharpoonup}


\definecolor{darkGray}{gray}{0.55} 
\definecolor{lightGray}{gray}{0.85} 

\newcommand{\rhoT}{\ensuremath{\rho_{\mathcal{T}}}\xspace}
\newcommand{\rhoB}{\ensuremath{\rho_{\mathbb{B}}}\xspace}
\newcommand{\innerLoop}{\ensuremath{\textit{inner}\_\textit{loop}}}

\newcommand{\AlgBruteForce}{
\SetAlgorithmName{Alg.}{} 
\textbf{Input: }$\phiT$ \\
$\varphi'\gets \phiT[l_i \leftarrow s_i]$  \textit{VR} $\gets$ \{\} \\
$\calC \gets \textit{choices}(\textit{literals}(\phiT))$ \\
$\calR \gets 2^{\calC}$\ \\
 \For{$(P,A) \in \calR$}{
   \If{$\exists\xs.\react_{(P,A)}(\xs)$}
   {
     $\textit{VR} \gets \textit{VR} \cup \{ (P,A) \}$ \\
    }
 }
 $\phiExtra \gets \textit{getExtra}(\!\textit{VR})$ \\
 \Return $\varphi' \wedge \square (A \Impl \phiExtra)$
 \rememberlines
}

\newcommand{\AlgModelLoop}{
  \resumenumbering
  \SetAlgorithmName{Alg.}{}
\textbf{Input: }$\phiT$ \\
$\varphi'\gets \phiT[l_i \leftarrow s_i]$ ; \textit{VR} $\gets$ \{\} \\
$\calC \gets \textit{choices}(\textit{literals}(\phiT))$ \\
$\calR \gets 2^{\calC}$ ;
$\psi \gets \top $ \\
 \While{$\textit{SAT}(\psi)$}{
    $m = model(\psi)$ \\
    \If{$\exists \overline{x} \textit{. } (\textit{toTheory}(m,\calC))$}{
      $P \gets \textit{posVars}(m)$ \\
      $\psi \gets \psi \wedge \neg(\bigwedge_{p \in P} p)$ \\
      \textit{VR} $\gets$ \textit{VR} $\cup$ $(e_{\textit{t}}, P)$ \\
    }
    \Else{
      $N \gets \textit{negVars}(m)$ \\
      $\textit{fh} \gets \bigwedge_{n \in N} n$ \\
    \If{$\exists \overline{x} \textit{. } \textit{toTheory}(\textit{fh},\calC)$}{
      $\psi \gets \psi\wedge \neg m$
    }
    \Else{
      $\psi \gets  \psi\wedge \neg \textit{fh}$ \\
    }
    }
 }
 $\phiExtra \gets \textit{getExtra}\textit{(VR)}$ \\
  \Return $\varphi' \wedge \square (A \Impl \phiExtra)$
 \rememberlines
}

\newcommand{\AlgNestedSAT}{
  \resumenumbering
\SetAlgorithmName{Alg.}{} 
\textbf{Input: }$\phiT$ \\
$\varphi'\gets \phiT[l_i \leftarrow s_i]$ ; \textit{VR} $\gets$ \{\} \\
$\calC \gets \textit{choices}(\textit{literals}(\phiT))$ \\
$\calR \gets 2^{\calC}$ ;
$\psi \gets \top $ \\
 \While{$\textit{SAT}(\psi)$}{
    $m = model(\psi)$ \\
    \If{$\exists \overline{x} \textit{. } (\textit{toTheory}(m, \calC))$}{
      $P \gets \textit{posVars}(m)$ \\
      $\psi \gets \psi \wedge \neg(\bigwedge_{p \in P} P)$ \\
      \textit{VR} $\gets$ \textit{VR} $\cup$ $(e_{\textit{t}}, P)$ \\
    }
    \Else{
      $N \gets \textit{negVars}(m)$ \\
      $\psi \gets \psi\wedge \neg m$ \\
      $I \gets \innerLoop (m,\calC)$ \\ 
      $\psi \gets \psi \wedge \neg(\bigwedge_{i \in I} i)$ \\
    }
 }
  $\phiExtra \gets \textit{getExtra}\textit{(VR)}$ \\
   \Return $\varphi' \wedge \square (A \Impl \phiExtra)$
  \rememberlines
}

\newcommand{\AlgInnerLoop}{
  \resumenumbering
\SetAlgorithmName{Alg.}{} 
\textbf{Input: }$m, \calC$ \\
\textit{VQ} $\gets$ \{\} ; $\beta \gets \top $ \\
 \While{$\textit{SAT}(\beta)$}{
    $u = model(\beta)$ \\
    \If{$\exists \overline{x} \textit{. } (\textit{toTheory\_inn}(u,m,\calC))$}{
      $P \gets \textit{posVars}(u)$ \\
      $\beta \gets \beta \wedge \neg(\bigwedge_{p \in P} p)$ \\
    }
    \Else{
      $N \gets \textit{negVars}(u)$ \\
      $\beta \gets \beta \wedge \neg(\bigwedge_{n \in N} n)$ \\
      \textit{VQ} $\gets$ \textit{VQ} $\cup$ $u$ \\
    }
 }
 \Return \textit{VQ}
 \rememberlines
}

\begin{document}

\maketitle

\newcommand{\TableBenchmark}{
\begin{figure}[t!]
\begin{tabular}{|c|c||c|c|c||c|c||c|c|c|c||c|c|}  \hline
  Bn. & Cls. & \multicolumn{3}{|c||}{Time (s) } & \multicolumn{2}{|c||}{Queries (out+inn)} & \multicolumn{4}{|c|}{Heuristics (doub)} & \multicolumn{2}{|c|}{$\varphi^{\mathbb{B}}$} \\ 
 \cline{3-13}
  (nm.) & (vr, lt) & BF & SAT & Doub. & SAT & Doub. & MxI. & Md. & Dc. & $A$. & Val. & Tme. \\ [0.5ex] 
 \hline\hline
\cellcolor{darkGray}  & (1, 7) & $\perp$ & 6740 & \textbf{31.77} & 30375 & \textbf{72/1040} & 40 & 2 & 0 & $\checkmark$ & 1 & \cellcolor{darkGray} \\ 

  \cellcolor{darkGray} &  (2, 4) & 3911 & \textbf{0.70} & 0.91 & \textbf{27} & 25/20 & 10 & 2 & 0 & $\times$ & 16 & \cellcolor{darkGray} \\ 
  \cellcolor{darkGray} & (1, 3) & 3.64 & 1.19 & \textbf{0.52} & 46 & \textbf{10/20} & 10 & 2 & 0 & $\times$ & 4 & \cellcolor{darkGray} \\ 
  \multirow{-4}{*}{\cellcolor{darkGray} {\textit{Li.}}} & (1, 2) & 0.23 & \textbf{0.09} & 0.14 & 4 & \textbf{4/3} & 3 & 3 & 0 & $\times$ & 3 & \multirow{-4}{*}{\cellcolor{darkGray} {4.41}}  \\ 
 \hline \hline
  \cellcolor{darkGray} & (1, 3) & $3.18$ & \textbf{0.04} &  0.96 & \textbf{16} & 26/20  & 10 & 2 & 0 & $\checkmark$ & 5 & \cellcolor{darkGray} \\ 

 \cellcolor{darkGray} &  (2, 1) & $0.05$ & \textbf{0.04} & \textbf{0.04} & \textbf{2} & \textbf{2/0} & 1 & 1 & 0 & $\checkmark$ & 2 & \cellcolor{darkGray} \\ 
 \cellcolor{darkGray} & (1, 3) & $3.10$ & 1.64 & \textbf{0.21} & $74$ & \textbf{2/10} & 10 & 2 & 0 & $\checkmark$ & 1 & \cellcolor{darkGray} \\ 
 \cellcolor{darkGray} & (1, 1) & \textbf{0.04} & 0.06 & 0.11 & \textbf{3} & 3/2 & 1 & 1 & 0 & $\checkmark$ & 1 & \cellcolor{darkGray} \\ 
 \cellcolor{darkGray} 
 \cellcolor{darkGray}   & (3, 6) & $\perp$ & 1269 & \textbf{112.5} & 13706 & \textbf{1170/4716} & 100 & 20 & 40 & $\times$ & 15 & \cellcolor{darkGray} \\ 

 \cellcolor{darkGray} & (4, 5) & $\perp$ & $5251$ & \textbf{4144} & \textbf{44177} & 52623/12332 & 100 & 20 & 40 & $\times$ & 24 & \cellcolor{darkGray} \\ 
  \cellcolor{darkGray} & (3, 5) & $\perp$ & 2044 & \textbf{359.3} & 31363 & \textbf{9123/10158} & 100 & 20 & 40 & $\times$ & 9 & \cellcolor{darkGray} \\ 
 \multirow{-8}{*}{\cellcolor{darkGray} {\textit{Tr.}}} & (4, 12) & $\perp$ & $\perp$ & \textbf{6571} & $\perp$ & \textbf{2728/40920} & 100 & 20 & 40 & $\times$ & 104 & \multirow{-8}{*}{\cellcolor{darkGray} {5.13}} \\ 
 \hline \hline
 \cellcolor{darkGray}  \textit{Con.}& (2, 2) & 0.23 & \textbf{0.09} & \textbf{0.09} & \textbf{4} & \textbf{4/0} & 3 & 3 & 0 & $\checkmark$ & 4 & \cellcolor{darkGray} 4.37 \\
  \hline \hline
 \cellcolor{lightGray} \textit{Coo.}&  (3, 5) & $\perp$ & 1356 & \textbf{2.81} & 27883 & \textbf{16/160} & 20  & 2 & 0 & $\checkmark$ & 1 & \cellcolor{lightGray} 3.64 \\

 \hline \hline
 \cellcolor{lightGray} & (2, 3) & $3.40$ & 0.21 & \textbf{0.17} & \textbf{8} & \textbf{8/0} & 3 & 3 & 0 & $\checkmark$ & 8 & \cellcolor{lightGray} \\ 
\multirow{-2}{*}{\cellcolor{lightGray} {\textit{Usb.}}}  & (3, 5) & $\perp$ & \textbf{231.9} & 364.4 & \textbf{5638} & \textbf{5638/0}  & 20 & 2 & 0 & $\checkmark$ & 32 & \multirow{-2}{*}{\cellcolor{lightGray} {3.93}} \\ 
 \hline \hline
  \cellcolor{darkGray} & (8, 8) & $\perp$ & \textbf{18.19} & 18.20 & \textbf{256} & \textbf{256/0} & 40 & 2 & 0 & $\checkmark$ & 256 & \cellcolor{darkGray} \\ 
  \multirow{-2}{*}{\cellcolor{darkGray} {\textit{St.}}} & (3, 6) & $\perp$ & 1311 & \textbf{194.8} & 14994 & \textbf{1697/6536}  & 100 & 20 & 40 & $\times$ & 45 & \multirow{-2}{*}{\cellcolor{darkGray} {6.06}} \\ 
 \hline \hline
   \multirow{7}{*}{\textit{Syn.}} & \cellcolor{darkGray} (2, 2) & 0.21 & 0.24 & \textbf{0.18} & 11 & \textbf{4/3}  & 3 & 3 & 0 & $\checkmark$ & 2 & \cellcolor{darkGray} 4.12 \\ 
 & \cellcolor{lightGray} (2, 3) & 3.42 & 2.69 & \textbf{1.24} & 119 & \textbf{14/40} & 10 & 2 & 0 & $\checkmark$ & 3 & \cellcolor{lightGray} 4.11\\
 & \cellcolor{lightGray} (2, 4) & $2842$ & 108.6 & \textbf{16.51} & 3982 & \textbf{188/620} & 10 & 2 & 0 & $\checkmark$ & 3 & \cellcolor{lightGray}  4.28 \\
 & \cellcolor{lightGray} (2, 5) & $\perp$ & $7151$ & \textbf{68.90} & $44259$ & \textbf{380/2800} & 20 & 2 & 0 & $\checkmark$ & 11 & \cellcolor{lightGray} 4.53 \\
  & \cellcolor{lightGray} (2, 6) & $\perp$ & $\perp$ & \textbf{402.2} & $\perp$ & \textbf{4792/9941} & 100 & 20 & 40 & $\times$ & 24 & \cellcolor{lightGray} 4.85 \\  
  & \cellcolor{darkGray} (2, 7) & $\perp$ & $\perp$ & \textbf{3596} & $\perp$ & \textbf{7344/139440}  & 40 & 2 & 0 & $\checkmark$ & 1 & \cellcolor{darkGray} 5.30 \\
  & \cellcolor{darkGray} (2, 7) & $\perp$ & $\perp$ & \textbf{3862} & $\perp$ & \textbf{24311/40615}  & 200 & 20 & 40 & $\times$ & 45 & \cellcolor{darkGray}  5.99 \\
 \hline

\end{tabular}

\caption{Empirical evaluation results of the different Boolean abstraction algorithms
         , where the best results are in \textbf{bold} and $\phiB$ only refers to best times.}
  \label{tab:benchmark}
\end{figure}
} 

\begin{abstract} 
  In this paper, we address the problem of the (reactive) realizability of
  specifications of theories richer than Booleans, including
  arithmetic theories.
  Our approach transforms theory specifications into purely Boolean
  specifications by (1) substituting theory literals by Boolean
  variables, and (2) computing an additional Boolean requirement that
  captures the dependencies between the new variables imposed by the
  literals.
  The resulting specification can be passed to existing Boolean
  off-the-shelf realizability tools, and is realizable if and only if the
  original specification is realizable.
  The first contribution is a brute-force version of our
  method, which requires a number of SMT queries that is doubly
  exponential in the number of input literals.
  Then, we present a faster method that exploits a nested encoding of
  the search for the extra requirement and uses SAT solving for faster
  traversing the search space and uses SMT queries internally.
  Another contribution is a prototype in Z3-Python.
  Finally, we report an empirical evaluation using specifications
  inspired in real industrial cases.
  To the best of our knowledge, this is the first method that succeeds in non-Boolean LTL realizability.
\end{abstract}


\section{Introduction}
\label{sec:intro}

%
%
Reactive
synthesis~\cite{pnueli89onthesythesis,pnueli89onthesythesis:b} is the
problem of automatically producing a system that is guaranteed to
model a given temporal specification, where the Boolean variables
(i.e., atomic propositions) are split into variables controlled by the
environment and variables controlled by the system.
Realizability is the related decision problem of deciding whether such
a system exists.
These problems have been widely
studied~\cite{jacobs17reactive,finkbeiner13bounded}, specially in the
domain of Linear Temporal Logic (LTL)~\cite{pnueli77temporal}.
Realizability corresponds to infinite games where players
alternatively choose the valuations of the Boolean variables they
control.
The winning condition is extracted from the temporal specification and
determines which player wins a given play.
A system is realizable if and only if the system player has a winning
strategy, i.e., if there is a way to play such that the
specification is satisfied in all plays played according to the
strategy.

However, in practice, many real and industrial specifications use
complex data beyond Boolean atomic propositions, which precludes the
direct use of realizability tools.
These specifications cannot be written in (propositional) LTL, but
instead use literals from a richer domain.
We use $\LTLt$ for the extension of LTL where Boolean atomic
propositions can be literals from a (multi-sorted) first-order theory
$\mathcal{T}$.
The $\calT$ variables (i.e., non-Boolean) in the specification are again
split into those controlled by the system and those controlled by the
environment.
The resulting realizability problem also corresponds to infinite
games, but, in this case, players chose valuations from the domains of
$\calT$, which may be infinite.
Therefore, arenas may be infinite and positions may have infinitely
many successors.
In this paper, we present a method that transforms a specification
that uses data from a theory $\calT$ into an equi-realizable Boolean
specification.
The resulting specification can then be processed by an off-the-shelf
realizability tool.

The main element of our method is a novel \emph{Boolean abstraction}
method, which allows to transform \LTLt
specifications into pure (Boolean) LTL specifications.
The method first substitutes all $\calT$ literals by fresh Boolean
variables controlled by the system, and then extends the specification
with an additional sub-formula that constrains the combination values
of these variables.
This method is described in Section~\ref{sec:booleanAbs}.
The main idea is that, after the environment selects values for its
(data) variables, the system responds with values for the variables it
controls, which induces a Boolean value for all the literals.
The additional formula we compute captures the set of possible
valuations of literals and the precise power of each player to produce
each valuation.
%

\begin{example}
  \label{example1}
  Consider the following specification
  $\varphi = \square (R_0 \wedge R_1)$, where:
  \[ R_0 : (x<2) \Into \Next(y>1) \hspace{4em}
    R_1 : (x \geq 2) \Into (y<x)
  \]
  where $x$ is a numeric variable that belongs to the environment and
  $y$ to the system.
  In the game corresponding to this specification, each player has an
  infinite number of choices at each time step.
  For example, in $\ThZ$ (the theory of integers), the environment
  player chooses an integer for $x$ and the system responds with an
  integer for $y$.
  This induces a valuation of all literals in the formula, which in
  turn induces (also considering the valuations of the literals at
  other time instants, according to the temporal operators) a
  valuation of the full specification.

  In this paper, we exploit that, from the point of view of the
  valuations of the literals, there are only \emph{finitely many}
  cases and provide a systematic manner to compute these cases.
  This allows us to reduce a specification into a purely Boolean
  specification that is equi-realizable.
  This specification encodes the (finite) set of decisions of the
  environment, and the (finite) set of reactions of the system. \qed
%
%
%

%
\end{example}

Ex.~\ref{example1} suggests a naive algorithm to capture the powers of the
environment and system to determine a combination of the valuations of
the literals, by enumerating all these combinations and checking the
validity of each potential reaction.
Checking that a given combination is a possible reaction requires an
$\exists^*\forall^*$ query (which can be delegated to an SMT solver
for appropriate theories).

In this paper, we describe and prove correct a Boolean abstraction
method based on this idea.
Then, we propose a more efficient search method for the set of
possible \reactions using SAT solving to speed up the exploration of
the set of reactions.
The main idea of this faster method is to learn from an invalid
reaction which other reactions are guaranteed to be invalid, and from
a valid reaction which other reactions are not worth being explored.
We encode these learnt sets as a incremental SAT formula that allows
to prune the search space.
The resulting method is much more efficient than brute-force
enumeration because, in each iteration, the learning can prune an
exponential number of cases.
An important technical detail is that computing the set of cases to be
pruned from the outcome of a given query can be described efficiently
using a SAT solver.

In summary, our contributions are: (1) a proof that realizability is
decidable for all \LTLt specifications for those theories $\calT$ with
a decidable $\exists^*\forall^*$ fragment; (2) a simple implementation
of the resulting Boolean abstraction method; (3) a much faster method
based on a nested-SAT implementation of the Boolean
abstraction method that efficiently explores the search space of
potential reactions; and (4) an empirical evaluation of these
algorithms, where our early findings suggest that Boolean abstractions
can be used with specifications containing different arithmetic
theories, and also with industrial specifications.
We used Z3~\cite{demoura08z3} both as an SMT solver and a SAT solver,
and Strix~\cite{meyer18strix} as the realizability checker.
To the best of our knowledge, this is the first method that succeeds
(and efficiently) in non-Boolean LTL realizability.


\section{Preliminaries} 
\label{sec:prelims}

We study realizability of LTL~\cite{pnueli77temporal,manna95temporal}
specifications.
The syntax of LTL is:
\[
  \varphi  ::= T \DefOR a \DefOR \varphi \lor \varphi \DefOR \neg \varphi
  \DefOR \Next \varphi \DefOR \varphi \U\varphi
\]
where $a$ ranges from an atomic set of proposition $\AP$, $\lor$,
$\land$ and $\neg$ are the usual Boolean disjunction, conjunction and
negation, and $\Next$ and $\U$ are the next and until temporal
operators.
The semantics of LTL associate traces $\sigma\in\Sigma^\omega$ with
formulae as follows:
 \[
   \begin{array}{l@{\hspace{0.3em}}c@{\hspace{0.3em}}l}
     \sigma \models T & \text{always} & \\
     \sigma \models a & \text{iff } & a \in\sigma(0) \\
     \sigma \models \varphi_1 \Or \varphi_2 & \text{iff } & \sigma\models \varphi_1 \text{ or } \sigma\models \varphi_2 \\
     \sigma \models \neg \varphi & \text{iff } & \sigma \not\models\varphi \\
     \sigma \models \Next \varphi & \text{iff } & \sigma^1\models \varphi \\
     \sigma \models \varphi_1 \U \varphi_2 & \text{iff } & \text{for some } i\geq 0\;\; \sigma^i\models \varphi_2, \text{ and } \text{for all } 0\leq j<i, \sigma^j\models\varphi_1 \\
   \end{array}
 \]
 We use common derived operators like $\vee$, $\calR$, $\Event$ and
 $\Always$.

Reactive synthesis
\cite{thomas08church,piterman06synthesis,bloem12synthesis,finkbeiner16synthesis,bloemETAL2021vacuitySynthesis}
is the problem of producing a system from an LTL specification, where
the atomic propositions are split into propositions that are
controlled by the environment and those that are controlled by the
system.
Synthesis corresponds to a turn-based game where, in each turn, the
environment produces values of its variables (inputs) and the system
responds with values of its variables (outputs).
A play is an infinite sequence of turns.
The system player wins a play according to an LTL formula $\varphi$ if
the trace of the play satisfies $\varphi$.
A (memory-less) strategy of a player is a map from positions into a
move for the player.
A play is played according to a strategy if all the moves of the
corresponding player are played according to the strategy.
A strategy is winning for a player if all the possible plays played
according to the strategy are winning.

Depending on the fragment of LTL used, the synthesis problem has
different complexities.
The method that we present in this paper generates a formula in the
same temporal fragment as the original formula (e.g., starting
from a safety formula another safety formula is generated).
The generated formula is discharged into a solver capable to solve
formulas in the right fragment.
For simplicity in the presentation, we illustrate our method with
safety formulae.

We use $\LTLt$ as the extension of LTL where propositions are replaced
by literals from a first-order theory $\calT$.
In realizability for $\LTLt$, the variables that occur in the literals
of a specification $\varphi$ are split into those variables controlled
by the environment (denoted by $\overline{v}_e$) and those controlled
by the system ($\overline{v}_s$), where $\VE \cap \VS = \emptyset$.
We use $\varphi(\VE,\VS)$ to remark that $\VE\cup\VS$ are the
variables occurring in $\varphi$.
The alphabet $\Sigma_{\calT}$ is now a valuation of the variables in
$\VE\cup\VS$.
A trace is an infinite sequence of valuations, which induces an
infinite sequence of Boolean values of the literals occurring in
$\varphi$ and, in turn, a valuation of the temporal formula.

Realizability for $\LTLt$ corresponds to an infinite game with an
infinite arena where positions may have infinitely many successors
if the ranges of the variables controlled by the system and the
environment are infinite.
For instance, in Ex.~\ref{example1} with $\calT=\ThZ$, valuation ranges
over infinite values, and literal $(x \geq 2)$ can be satisfied with $x=2$,
$x=3$, etc.

Arithmetic theories are a particular class of first-order
theories.
Even though our Boolean abstraction technique is applicable to
any theory with a decidable $\exists^*\forall^*$ fragment, we
illustrate our technique with arithmetic specifications.
Concretely, we will consider $\calT_{\mathbb{Z}}$ (i.e., linear
integer arithmetic) and $\calT_{\mathbb{R}}$ (i.e., non-linear real
arithmetic).
Both theories have a decidable $\exists^*\forall^*$ fragment.
Note that the choice of the theory influences the realizability of a given
formula.

\begin{example}
  \label{ex:two}
  Consider Ex.~\ref{example1}. The formula $\varphi:=R_0\And R_1$ is
  not realizable for $\calT_{\mathbb{Z}}$, since, if at a given
  instant $t$, the environment plays $x=0$ (and hence $x<2$ is true),
  then $y$ must be greater than $1$ at time $t+1$. Then, if at $t+1$
  the environment plays $x=2$ then $(x\geq 2)$ is true but there is no
  $y$ such that both $(y>1)$ and $(y<2)$.
  However, for $\calT_{\mathbb{R}}$, $\varphi$ is realizable (consider
  the system strategy to always play $y=1.5$).

  The following slight modifications of Ex.~\ref{example1} alters its
  realizability ($R_1'$ substitutes $R_1$ by having the $\calT$-predicate $y{}\leq x$ instead of $y<x$):
  \[
    R_0 : (x<2) \Into \Next(y>1) \hspace{4em}R_1' : (x \geq 2) \Impl (y \leq x)
  \]
  Now, $\varphi' = \Always(R_0 \wedge R_1')$ is realizable for both
  $\calT_{\mathbb{Z}}$ and $\calT_{\mathbb{R}}$, as the strategy of
  the system to always pick $y=2$ is winning in both theories.\qed
\end{example}  


\section{Boolean Abstraction} \label{sec:booleanAbs}

\newcommand{\vs}{\ensuremath{\overline{v}}\xspace}
\newcommand{\ws}{\ensuremath{\overline{w}}\xspace}

We solve the realizability problem modulo theories by transforming the
specification into an equi-realizable Boolean specification.
Given a specification $\varphi$ with literals $l_i$, we get a
new specification
$\varphi[l_i \leftarrow s_i] \wedge \Always\phiEx$, where $s_i$ are
fresh Boolean variables and $\phiEx \in \LTLB$ is a Boolean
formula (without temporal operators).
The additional sub-formula $\phiEx$ uses the freshly introduced
variables $s_i$ controlled by the system, as well as additional
Boolean variables controlled by the environment $\overline{e}$, and
captures the precise combined power of the players to decide the
valuations of the literals in the original formula.
We call our approach \emph{Booleanization} or \emph{Boolean
  abstraction}.
The approach is summarized in Fig.~\ref{figBigPicture}: given an LTL
specification $\phiT$, it is translated into a Boolean
$\phiB$ which can be analyzed with off-the-shelf
realizability checkers.
Note that $\GameB$ and $\GameT$ are the games constructed from specifications
$\phiB$ and $\phiT$, respectively. 
Also, note that \cite{gradelETAL2002automataLogicsInfiniteGames} shows that we can
construct a game $\mathcal{G}$ from a specification $\varphi$ and that
$\varphi$ is realizable if and only if $\mathcal{G}$ is winning for the
system.

\begin{figure}[b!]
  \centering
  \includegraphics[scale=0.35]{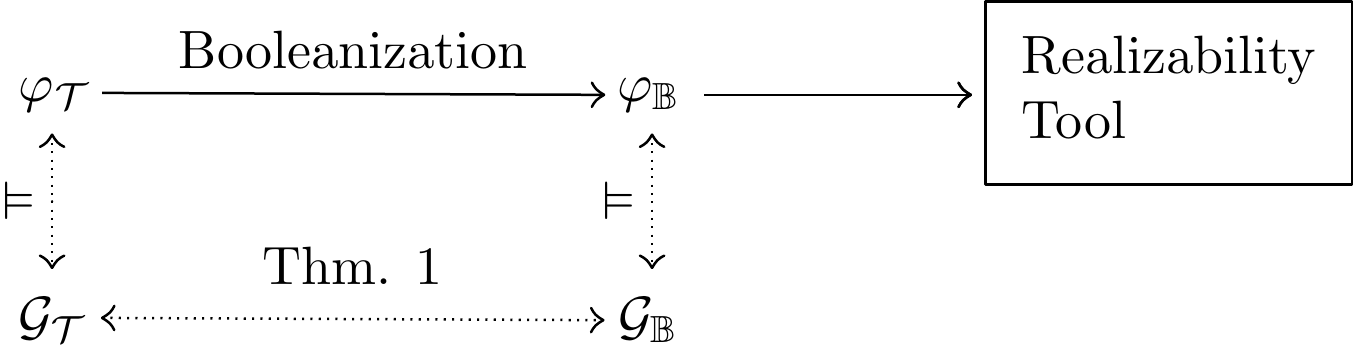}
  \caption{The tool chain with the correctness argument.}
  \label{figBigPicture}
\end{figure}

The Booleanization procedure constructs an extra requirement $\phiEx$
and conjoins $\Always\phiEx$ with the formula
$\varphi[l_i \leftarrow s_i]$.
In a nutshell, after the environment chooses a valuation of the
variables it controls (including $\overline{e}$), the system responds
with valuations of its variables (including $s_i$), which induces a
Boolean value for all literals.
Therefore, for each possible choice of the environment, the system has
the power to choose a Boolean response among a specific collection of
responses (a subset of all the possible combinations of Boolean
valuations of the literals).
Since the set of all possible responses is finite, so are the
different cases.
The extra requirement captures precisely the finite collection of
choices of the environment and the resulting finite collection of
responses of the system for each case.

\subsection{Notation}

In order to explain the construction of the extra requirement, we
introduce some preliminary definitions.
We will use Ex.~\ref{example1} as the running example.

%
A literal is an atom or its negation, 
regardless of whether the atom is a Boolean variable or a predicate of a theory.
Let $\Lit(\varphi)$ be the collection of literals that appear in
$\varphi$ (or $\Lit$, if the formula is clear from the context).
For simplicity, we assume that all literals belong the same theory,
but each theory can be Booleanized in turn, as each literal belongs to
exactly one theory and we assume in this paper that literals from
different theories do not share variables.
We will use \xs as the environment controlled variables occurring in
$\Lit(\varphi)$ and \ys for the variables controlled by the system.

In Ex.~\ref{example1}, we first translate the literals in $\varphi$.
Since $(x<2)$ is equivalent to $\neg (x \geq 2)$, we use a single
Boolean variable for both.
The substitutions is:
\[
  \begin{array}{r@{\;\leftarrow\;}l@{\hspace{3em}}r@{\;\leftarrow\;}l@{\hspace{3em}}r@{\;\leftarrow\;}l}
    (x<2) & s_0 & (y>1) & s_1 & (y<x)& s_2 \\
    (x\geq{}2) & \neg{}s_0 & (y\leq{}1) & \neg{}s_1 & (y\geq x)& \neg{}s_2 \\
  \end{array}
\]
%
%
After the substitution we obtain
$\varphi'' = \Always(R_0^{\mathbb{B}} \wedge R_1^{\mathbb{B}})$ where
\[
R_0^{\mathbb{B}}: s_0 \Impl \Next s_1 \hspace{4em}
R_1^{\mathbb{B}}: \neg s_0 \Impl s_2
\]
Note that $\varphi''$ may not be equi-realizable to $\varphi$, as we
may be giving too much power to the system if $s_0$, $s_1$ and $s_2$
are chosen independently without restriction.
Note that $\varphi''$ is realizable, for example by always choosing
$s_1$ and $s_2$ to be true, but $\varphi$ is not realizable in
$\textit{LTL}_{\ThZ}$. This justifies the need of an extra sub-formula.

%
\begin{definition}[Choice]
  A \configuration $c\subseteq \Lit(\varphi)$ is a subset of the literals of $\varphi$.
\end{definition}
The intended meaning of a \configuration is to capture what literals
are true in the \configuration, while the rest
(i.e., $\Lit\setminus{}c$) are false.
Once the environment picks values for $\xs$, the system can realize
some \configuration $c$ by selecting $\ys$ and making the literals in
$c$ true (and the rest false).
However, for some values of $\xs$, some \configurations may not be
possible for the system for any $\ys$.
Given a \configuration $c$, we use $f(c(\xs,\ys))$ to denote the formula:
\[ \bigwedge_{l\in c} l \wedge \bigwedge_{l \notin c} \neg l \]
which is a formula with variables $\xs$ and $\ys$ that captures
logically the set of values of $\xs$ and $\ys$ that realize precisely
\configuration $c$.
We use $\calC$ for the set of \configurations.
Note that there are $|\calC|=2^{|\Lit|}$ different \configurations.
We call the elements of $\calC$ \configurations because they may be at
the disposal of the system to choose by picking the right values of its variables.


%
A given \configuration $c$ can act as \textit{potential}
(meaning that the response is possible) or as \textit{antipotential}
(meaning that the response is not possible).
A potential is a formula (that depends only on \xs) that captures
those values of $\xs$ for which the system can respond and make
precisely the literals in $c$ true (and the rest of the literals
false).
The negation of the potential (i.e., an antipotential) captures precisely those values of $\xs$
for which there are no values of $\ys$ that lead to $c$.

\begin{definition}[Potential and Antipotential]
  Given a choice $c$, a potential is the following formula $\PT{c}$ and an
  antipotential is the following formula $\NT{c}$:
\[
  \PT{c}(\xs) = \exists\ys.f(c(\xs,\ys))\hspace{5em}
  \NT{c}(\xs) = \forall\ys.\neg{}f(c(\xs,\ys))
\]
\end{definition}

\begin{example}
  \label{ex:three}
  We illustrate two choices for Ex.~\ref{example1}.
  Consider choices $c_0=\{(x<2), (y>1), (y<x)\}$ and
  $c_1=\{(x<2),(y>1)\}$.
    Choice $c_0$ corresponds to
    $f(c_0) = (x<2) \wedge (y>1) \wedge (y<x)$, that is, literals
    $(x<2)$, $(y>1)$ and $(y<x)$ are true.
    Choice $c_1$ corresponds to
    $f(c_1) =(x<2) \wedge (y>1) \wedge (y \geq x)$, that is,
    literals $(x<2)$ and $(y>1)$ being true and $(y < x)$ being false
    (i.e., $(y \geq x)$ being true).
It is easy to see the meaning of $c_2$, $c_3$ etc.
Then, the 
    potential and antipotential formulae of e.g., \configurations $c_0$
    and $c_1$ from Ex.~\ref{example1} are as follows:
\[
\begin{array}{l@{\hspace{2.3em}}l}
  \PT{c_0} =  \exists y . (x<2) \And (y>1) \And (y<x) & \NT{c_0} = \forall y . \neg\big( (x<2) \And (y>1) \And (y<x)\big) \\
\PT{c_1} = \exists y .  (x<2) \And (y>1) \And (y \geq x) &
  \NT{c_1} = \forall y . \neg\big((x<2) \And (y>1) \And (y \geq x)\big)
\end{array}
\]
Note that potentials and antipotentials have $\xs$ as the only free variables.
\qed
\end{example}
Depending on the theory, the validity of potentials and antipotentials
may be different.
For instance, consider $\PT{c_0}$ and theories $\mathcal{T}_{\mathbb{Z}}$ and
 $\mathcal{T}_{\mathbb{R}}$:
\begin{itemize}
\item In $\mathcal{T}_{\mathbb{Z}}$:
  $\exists y . (x<2) \wedge (y>1) \wedge (y<x)$ is
  equivalent to \textit{false}.
\item In $\mathcal{T}_{\mathbb{R}}$:
  $\exists y . (x<2) \wedge (y>1) \wedge (y<x)$ is equivalent to
  $(x<2)$.
\end{itemize}
These equivalences can be obtained using classic quantifier elimination procedures, e.g., with Cooper's
algorithm~\cite{cooper1972theoremProving} for
$\mathcal{T}_{\mathbb{Z}}$ and Tarski's
method~\cite{tarski1951decisionElementary} for
$\mathcal{T}_{\mathbb{R}}$.
%
  
%
A reaction is a description of the specific \configurations that the
system has the power to choose.
\begin{definition}[Reaction] 
  Let $P$ and $A$ be a partition of $\calC$ that is:
  $P \subseteq \calC$, $A\subseteq \calC$, $P\cap A=\emptyset$ and
  $P\cup A=\calC$.
  The reaction $\react_{(P,A)}$ is as follows:
\[
  \react_{(P,A)}(\xs) \DefinedAs \bigwedge_{c \in P} \PT{c} \wedge
  \bigwedge_{c \in A} \NT{c}
\]
\end{definition}
The reaction $\react_{(P,A)}$ is equivalent to:
\[
  \react_{(P,A)}(\xs) = \bigwedge_{c \in P} \big(\exists \ys. f(c(\xs,\ys))\big) \wedge \bigwedge_{c \in A} \big(\forall \ys.\neg f(c(\xs,\ys))\big).
\]
There are $2^{2^{|\Lit|}}$ different reactions.

A reaction $r$ is called valid whenever there is a move of the
environment for which $r$ captures precisely the power of the system,
that is exactly which \configurations the system can choose.
Formally, a reaction is valid whenever $\exists \xs. r(\xs)$ is a
valid formula.
We use $\calR$ for the set of reactions and $\textit{VR}$ for the set of valid
reactions.
It is easy to see that, for all possible valuations of $\xs$ the
environment can pick, the system has a specific power to respond
(among the finitely many cases).
Therefore, the following formula is valid:
\[
  \varphi_{\textit{VR}}=\forall\xs.\bigvee_{r\in\textit{VR}}r(\xs).
\]
\begin{example}
  \label{ex:four}
In Ex.~\ref{example1}, for theory $\mathcal{T}_{\mathbb{Z}}$, we find there are two valid reactions
(using choices from Ex.~\ref{ex:three}):
\[
  \begin{array}{rcl}
    r_1 & : & \exists x .c_0^a \And c_1^p \And c_2^p \And c_3^p \And
              c_4^a \And c_5^a \And c_6^a \And c_7^a \\
    r_2 & : & \exists x . c_0^a \And c_1^a \And c_2^a \And c_3^a
              \And c_4^a \And c_5^p \And c_6^p \And c_7^a,
  \end{array}
\]
%
where reaction $r_1$ models the possible responses of the system after the
environment picks a value for $x$ with $(x<2)$, whereas $r_2$ models
the responses to $(x \geq 2)$. On the other hand, for $\mathcal{T}_{\mathbb{R}}$, there are three
valid reactions:
\[
  \begin{array}{rcl}
    r_1 & : & \exists x. c_0^a \wedge c_1^p \wedge c_2^p \wedge
  c_3^p \wedge c_4^a \wedge c_5^a \wedge c_6^a \wedge c_7^a \\
    r_2 & : & \exists x. c_0^p \wedge c_1^p \wedge c_2^p 
    \wedge c_3^a \wedge c_4^a \wedge c_5^a \wedge c_6^a \wedge c_7^a \\
    r_3 & : & \exists x . c_0^a \wedge c_1^a \wedge c_2^a 
    \wedge c_3^a \wedge c_4^p \wedge c_5^p \wedge c_6^p \wedge c_7^a
  \end{array}
\]

Note that there is one valid reaction more, since in $\mathcal{T}_{\mathbb{R}}$ 
there is one more case: $x\in(1,2]$.
Also, note that $c_4$ cannot be a potential in
$\mathcal{T}_{\mathbb{Z}}$ (not even with a collaboration between
environment and system), whereas it can in $\mathcal{T}_{\mathbb{R}}$.
\qed
\end{example}

\subsection{The Boolean Abstraction Algorithm}
\label{sec:boolabsalg}

Boolean abstraction is a method to compute $\phiB$ from $\phiT$.
In this section we describe and prove correct a basic brute-force
version of this method, and later in Section~\ref{sec:efficient}, we
present faster algorithms.
%
%
All Boolean abstraction algorithms that we present on this paper first
compute the extra requirement, by visiting the set of reactions and
computing a subset of the valid reactions that is sufficient to
preserve realizability.
The three main building blocks of our algorithms are
(1) the stop criteria of the search for reactions;
(2) how to obtain the next reaction to consider;
and (3) how to modify the current set of valid reactions
(by adding new valid reactions to it) and the set of remaining
reactions (by pruning the search space).
Finally, after the loop, the algorithm produces as $\phiExtra$ a
conjunction of cases, one per valid reaction $(P,A)$ in $\textit{VR}$.
\begin{wrapfigure}[12]{l}{0.46\textwidth}
  \begin{minipage}{0.46\textwidth}
    \vspace{-2.3em}
\begin{algorithm}[H]
  \AlgBruteForce
 \caption{Brute-force}
 \label{algoBruteForce}
\end{algorithm}
\end{minipage}
\end{wrapfigure}
We introduce a fresh variable $e_{(P,A)}$, controlled by the
environment for each valid reaction $(P,A)$, to capture that the
environment plays values for $\xbar$ that correspond to the case where
the system is left with the power to choose captured precisely by
$(P,A)$.
Therefore, there is one additional environment Boolean variable per
valid reaction (in practice we can enumerate the number of valid
reactions and introduce only a logarithmic number of environment
variables).
Finally, the extra requirement uses $P$ for each valid reaction
$(P,A)$ to encode the potential moves of the systems as a disjunction
of the literals described by each \configuration in $P$.
Each of these disjunction contains precisely the combinations of
literals that are possible for the concrete case that $(P,A)$
captures.

A brute-force algorithm that implements Boolean abstraction method by
exhaustively searching all reactions is shown in
Alg~\ref{algoBruteForce}.
The building blocks of this algorithm are:
%
\begin{compactenum}[(1)]
\item It stops when the remaining set of reactions is empty.
\item It traverses the set $\calR$ according to some predetermined order.
\item To modify the set of valid reactions, if $(P,A)$ is valid it
  adds $(P,A)$ to the set \textit{VR} (line $7$). To modify the set of
  remaining reactions, it removes $(P,A)$ from the search.
\end{compactenum}
%
Finally, the extra sub-formula $\phiExtra$ is generated by
\textit{getExtra} (line $8$) defined as follows:
\[
  \textit{getExtra}(\VR) =
  \bigwedge_{(P,A)\in \textit{VR}} (e_{(P,A)} \rightarrow \bigvee_{c\in P} (\bigwedge_{l_i\in c}s_i\And \bigwedge_{l_i\notin c}\neg s_i))
\]
Note that there is an $\exists^*\forall^*$ validity query in the body of the loop (line 6) 
to check whether the candidate reaction is valid.
This is why decidability of the $\exists^*\forall^*$ fragment is crucial because it captures the finite partitioning of the environment moves 
(which is existentially quantified) for which the system can react in certain ways (i.e., potentials, which are existentially quantified) 
by picking appropriate valuations but not in others (i.e., antipotentials, which are universally quantified).
In essence, the brute-force algorithm iterates over all the reactions,
one at a time, checking whether each reaction is valid or not.
In case the reaction (characterized by the set of potential
\configurations\footnote{The potentials in a choice characterize the
  precise power of the system player, because the potentials
  correspond with what the system can respond.}) is valid, it is
added to $\VR$.

\begin{example}
  \label{ex:five}
  Consider again the specification in Ex.~\ref{example1}, with $\ThZ$
  as theory. Note that the valid reactions are $r_1$ and $r_2$, as
  shown in Ex.~\ref{ex:four}, where the potentials of $r_1$ are
  $\{c_1,c_2,c_3\}$ and the potentials of $r_2$ are $\{c_5,c_6\}$.
  Now, the creation of $\phiEx$ requires two fresh variables $d_0$ and
  $d_1$ for the environment (they correspond to environment decisions
  $(x<2)$ and $(x \geq 2)$, respectively), resulting into:
 \begin{align*}
   \varphi^{\textit{extra}}_{\mathcal{T}_{\mathbb{Z}}} :
   \begin{pmatrix}
     \begin{array}{rcl}
   d_0 & \Impl & \big(     (s_0 \And s_1 \And \neg s_2) \Or
     (s_0 \And \neg s_1 \And s_2) \Or
       (s_0 \And \neg{}s_1 \And \neg s_2)
       \big)
 \\  &\And& \\
 d_1 &\Impl&\big( (\neg{}s_0 \And s_1 \And \neg s_2 )\Or 
       (\neg{}s_0 \And \neg{}s_1 \And s_2)\big)
     \end{array}
   \end{pmatrix}
 \end{align*}

 For example $c_2=\{s_0\}$ is a choice that appears as potential in
 valid reaction $r_1$, so it appears as a disjunct of $d_0$ as
 $(s_0\And\neg{}s_1\And\neg{}s_2)$.
 The resulting \textit{Booleanized} specification $\phiB$ is as follows:
\[
  \hspace{10.6em}
  \varphi^{\mathbb{B}}_{\mathcal{T}_{\mathbb{Z}}} = (\varphi'' \And 
  \square (A_{\mathbb{B}} \Impl \varphi^{\textit{extra}}_{\mathcal{T}_{\mathbb{Z}}}))
  \hspace{10.6em}\qed
\]
\end{example}

Note that the Boolean encoding is extended with an assumption formula 
$A_{\mathbb{B}} = (d_0 \leftrightarrow \neg d_1) \wedge (d_0 \vee d_1)$
that restricts environment moves to guarantee that exactly one
environment decision variable is picked.
Also, note that a Boolean abstraction algorithm will output three (instead of two)
decisions for the environment, but we ackowledge that one of them will never be played by 
it, since it gives strictly more power to the system.
The complexity of this brute-force Booleanization algorithm is doubly
exponential in the number of literals.

\subsection{From Local Simulation to Equi-Realizability} 

The intuition about the correctness of the algorithm is that the extra
requirement encodes precisely all reactions (i.e., collections of
\configurations), for which there is a move of the environment that
leaves the system with precisely that power to respond.
%
As an observation, in the extra requirement, the set of potentials in
valid reactions cannot be empty.
%
%
This is stated in Lemma \ref{lemmNotEmptyReactions}.

\begin{lemma} \label{lemmNotEmptyReactions} Let $C \in \mathcal{C}$ be
  such that $react_C \in \textit{VR}$. Then $C \neq \emptyset$.
\end{lemma}

\begin{proof}
  Bear in mind $\react_C \in \textit{VR}$ is valid. 
  Let $\vs$ be such that $\react_C[\xs \shortleftarrow \vs]$ is
  valid.
  Let $\ws$ be an arbitrary valuation of $\ys$ and let $c$ be a
  choice and $l$ a literal.
  Therefore:
  \[\bigwedge_{l[\xs \shortleftarrow \vs, \ys \shortleftarrow \ws]\textit{ is true }}l \wedge \bigwedge_{l[\xs \shortleftarrow \vs, \ys \shortleftarrow \ws]\textit{ is false }}\neg l\]
  It follows that $I[\xs \leftarrow \vs] \exists \ys . c$, so $c\in C$.
  \qed
\end{proof}
Lemma~\ref{lemmNotEmptyReactions} is crucial, because it ensures that
once a Boolean abstraction algorithm is executed, for each fresh $\overline{e}$ variable
in the extra requirement, at least one reaction with one or more
potentials can be responded by the system.

Therefore, in each position in the realizability game, the system can
respond to moves of the system leaving to precisely corresponding
positions in the Boolean game.
In turn, this leads to equi-realizability because each move can be
simulated in the corresponding game.
Concretely, it is easy to see that we can define a simulation between the positions of the games for $\phiT$
and $\phiB$ such that (1) each literal $l_i$ and the corresponding
variable $s_i$ have the same truth value in related positions, (2) the
extra requirement is always satisfied, and (3) moves of the system in
each game from related positions in each game can be mimicked in the other
game.
This is captured by the following theorem:

\begin{theorem} 
  \label{temporalGlobalTheorem} 
   System wins $\GameT$ if and only if System wins the game $\GameB$.
  Therefore, $\phiT$ is realizable if and only if $\phiB$ is
  realizable.
\end{theorem}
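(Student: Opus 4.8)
The plan is to prove the theorem by establishing a bisimulation-style correspondence between plays of the two games, $\GameT$ (over the theory specification $\phiT$) and $\GameB$ (over the Boolean specification $\phiB = \varphi[l_i \leftarrow s_i] \wedge \Always(A_\Bool \Impl \phiEx)$). The key structural fact to exploit is that a single round of each game — one environment move followed by one system move — induces the same \emph{Boolean valuation of the literals}, and that the extra requirement $\phiEx$ was constructed precisely to record which such valuations are simultaneously achievable. Since the temporal parts of $\phiT$ and $\varphi[l_i \leftarrow s_i]$ are syntactically identical once literals are replaced by the $s_i$, if I can show the two games generate the same sequences of literal truth-values along corresponding plays, then any winning strategy transfers.

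First I would set up the correspondence at the level of a single position. In $\GameT$, the environment picks a valuation $\ws_e$ for $\xs$ and the system responds with $\ws_s$ for $\ys$; this determines a choice $c \subseteq \Lit$, namely the set of literals made true. In $\GameB$, the environment picks a Boolean valuation including one decision variable $e_{(P,A)}$ (governed by the assumption $A_\Bool$, which forces exactly one decision to be active), and the system picks a valuation of the $s_i$. The assumption $A_\Bool$ restricts the environment to choosing exactly one valid reaction $(P,A)$, and $\phiEx$ restricts the system's legal $s_i$-responses under $e_{(P,A)}$ to exactly the choices $c \in P$. The core claim, which I would state as a lemma, is that the set of choices the system can realize in $\GameT$ after environment move $\ws_e$ is exactly the set $P$ of some valid reaction $(P,A) \in \textit{VR}$, and conversely every valid reaction arises from some environment move. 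The forward direction is immediate from the definition of $\react_{(P,A)}$ being valid ($\exists\xs.\,r(\xs)$); the surjectivity onto $\textit{VR}$ follows because $\varphi_{\textit{VR}}=\forall\xs.\bigvee_{r\in\textit{VR}}r(\xs)$ is valid, so every environment move falls under some valid reaction.

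With the per-round correspondence in hand, I would build the strategy transfer in both directions and argue by induction on plays. Given a winning system strategy in $\GameT$, I define a system strategy in $\GameB$ as follows: when the environment plays decision $e_{(P,A)}$, I have the $\GameB$-system respond with the $s_i$-pattern corresponding to whichever choice $c \in P$ the $\GameT$-strategy would have realized for a witnessing environment move of $(P,A)$; Lemma~\ref{lemmNotEmptyReactions} guarantees $P \neq \emptyset$, so such a $c$ exists and the response is legal under $\phiEx$. Conversely, given a winning system strategy in $\GameB$, for each environment move $\ws_e$ in $\GameT$ I identify the valid reaction $(P,A)$ it induces, feed the Boolean game the decision $e_{(P,A)}$, read off the $\GameB$-system's chosen $c \in P$, and have the $\GameT$-system play any $\ws_s$ realizing $c$ (which exists precisely because $c \in P$ means $\PT{c}$ holds for $\ws_e$). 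In both directions, the induced traces agree on all literal/$s_i$ truth-values at every instant, so one trace satisfies $\varphi$ iff the other satisfies $\varphi[l_i \leftarrow s_i]$, and the extra conjunct $\Always(A_\Bool \Impl \phiEx)$ is satisfied by construction; hence winning is preserved.

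The main obstacle I anticipate is handling the decision variables $e_{(P,A)}$ cleanly, in particular the remark in the text that the algorithm may output a reaction giving the system strictly more power that the environment will never rationally play. In the direction from $\GameB$ to $\GameT$ this is not a problem, but in the direction from $\GameT$ to $\GameB$ I must ensure the simulation selects a single canonical valid reaction for each environment move even when several valid reactions are compatible; I would resolve this by fixing, once and for all, a function mapping each environment valuation $\ws_e$ to one valid reaction it satisfies (well-defined by validity of $\varphi_{\textit{VR}}$), and showing that the extra requirement never \emph{forbids} a legal response, only constrains it to the correct set $P$. A secondary subtlety is that the correspondence is round-local whereas the winning condition is a global temporal property; the resolution is precisely that the temporal skeletons of $\phiT$ and $\varphi[l_i \leftarrow s_i]$ coincide, so global satisfaction reduces to the instant-by-instant agreement of literal valuations that the round-local simulation already guarantees.
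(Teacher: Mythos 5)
Your proposal is correct and takes essentially the same route as the paper: a position-wise simulation under which each literal $l_i$ and its variable $s_i$ agree, with strategy transfer in both directions, grounded on the validity of $\varphi_{\textit{VR}}$ and on Lemma~\ref{lemmNotEmptyReactions}. The differences are presentational only: you spell out the per-round correspondence (environment moves $\leftrightarrow$ valid reactions, system responses $\leftrightarrow$ choices in $P$) that the paper's sketch compresses into simulation properties (1)--(3), and you argue by induction on plays where the paper instead invokes memoryless determinacy to restrict to local strategies.
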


\begin{proof}
(Sketch). 
Since realizability games are memory-less determined, it is sufficient
to consider only local strategies.
Given a  strategy $\rhoB$ that is winning in $\GameB$ we define
a strategy $\rhoT$ in $\GameT$ as follows.
Assuming related positions, $\rhoT$ moves in $\GameT$ to the successor
that is related to the position where $\rhoB$ moves in $\GameB$.
By (3) above, it follows that for every play played in $\GameB$
according to $\rhoB$ there is a play in $\GameT$ played according to
$\rhoT$ that results in the same trace, and vice-versa: for every play
played in $\GameT$ according to $\rhoT$ there is a play in $\GameB$
played according to $\rhoB$ that results in the same trace.
Since $\rhoB$ is winning, so is $\rhoT$.
The other direction follows similarly, because again $\rhoB$ can be
constructed from $\rhoT$ not only guaranteeing the same valuation of
literals and corresponding variables, but also that the extra
requirement holds in the resulting position.
\qed
\end{proof}

The following corollary of Thm.~\ref{temporalGlobalTheorem} follows
immediately.
\begin{theorem} 
  \label{ltlDecidable}
  Let $\mathcal{T}$ be a theory with a decidable $\exists^*\forall^*$-fragment.
  Then, $\LTLt$ realizability is decidable.
\end{theorem}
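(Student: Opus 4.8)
The plan is to present Boolean abstraction as a \emph{computable} reduction from $\LTLt$ realizability to ordinary Boolean LTL realizability, and then to close the argument by combining three facts: (i) the reduction terminates and produces its output effectively, (ii) by Theorem~\ref{temporalGlobalTheorem} the output $\phiB$ is realizable exactly when the input $\phiT$ is, and (iii) realizability of a Boolean LTL formula is decidable via the classical reduction to a finite-state game (cf.~\cite{gradelETAL2002automataLogicsInfiniteGames}). Given (ii) and (iii), decidability of $\LTLt$ realizability follows as soon as (i) is established, so the whole proof reduces to checking that Alg.~\ref{algoBruteForce} is an algorithm in the strict sense when $\calT$ has a decidable $\exists^*\forall^*$ fragment.

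First I would observe that every object the algorithm manipulates is finite and finitely presented. The literal set $\Lit(\varphi)$ is finite, so there are exactly $2^{|\Lit|}$ choices, $|\calC|=2^{|\Lit|}$, and $2^{2^{|\Lit|}}$ candidate reactions; the main loop visits each candidate once and therefore halts. The concluding step, forming $\phiEx$ from the accumulated set $\textit{VR}$ through $\textit{getExtra}$, is a purely syntactic manipulation of these finite sets. Hence the construction is effective provided that the single test performed in the loop body --- deciding whether $\exists\xs.\react_{(P,A)}(\xs)$ holds in $\calT$ --- is decidable.

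This last point is the crux, and I expect it to be the only genuine obstacle. Expanding the reaction, the query reads
\[
  \exists\xs.\Big(\bigwedge_{c\in P}\exists\ys.\,f(c(\xs,\ys)) \;\wedge\; \bigwedge_{c\in A}\forall\ys.\,\neg f(c(\xs,\ys))\Big).
\]
Here the existential quantifiers of the potentials can be renamed to pairwise-disjoint fresh tuples and hoisted to the front next to $\exists\xs$, while the universal quantifiers of the antipotentials remain as a block of $\forall$'s; since each $f(c)$ is quantifier-free, prenexing yields a closed sentence whose quantifier prefix is of the form $\exists^*\forall^*$. By hypothesis, the truth of such a sentence in $\calT$ is decidable, so each loop iteration is effective, and with it the whole reduction. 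The care needed is precisely to verify that this prenex form introduces no hidden quantifier alternation --- that the only universals come from antipotentials and the only existentials from potentials and $\xs$ --- and that this holds uniformly across all reactions rather than degenerating for some $(P,A)$. Granting this, (i) holds, and combined with (ii) and (iii) above it gives a decision procedure for $\LTLt$ realizability.
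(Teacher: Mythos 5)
Your proposal is correct and takes essentially the same route as the paper: the paper obtains this theorem as an immediate corollary of Theorem~\ref{temporalGlobalTheorem}, relying on the facts that Alg.~\ref{algoBruteForce} terminates after finitely many validity queries, that each such query is in the $\exists^*\forall^*$ fragment and hence decidable by hypothesis, and that realizability of the resulting Boolean LTL formula is decidable. Your explicit check that $\exists\xs.\react_{(P,A)}(\xs)$ prenexes into $\exists^*\forall^*$ form (renaming the potentials' existentials apart and merging the antipotentials' universals, with no hidden alternation) merely spells out what the paper asserts in prose about the query in line 6 of Alg.~\ref{algoBruteForce}.
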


%



\section{Efficient algorithms for Boolean Abstraction}
\label{sec:efficient}

\subsection{Quasi-reactions}

The basic algorithm presented in Section~\ref{sec:booleanAbs}
exhaustively traverses the set of reactions, one at a time, checking
whether each reaction is valid.
Therefore, the body of the loop is visited $2^{|\calC|}$ times.
In practice, the running time of this basic algorithm quickly becomes
unfeasible.

We now improve Alg.~\ref{algoBruteForce} by exploiting the observation
that every SMT query for the validity of a reaction reveals
information about the validity of other reactions.
We will exploit this idea by learning uninteresting subsequent sets of
reactions and pruning the search space.
The faster algorithms that we present below encode the remaining search
space using a SAT formula, whose models are further reactions to
explore.

To implement the learning-and-pruning idea we first introduce the
notion of quasi-reaction.

\begin{definition}[Quasi-reaction]
  A quasi-reaction is a pair $(P,A)$ where $P\subseteq{}\calC$,
  $A\subseteq{}\calC$ and  $P \cap A=\emptyset$.
\end{definition}

Quasi-reactions remove from reactions the constraint that
$P \cup A=\calC$.
A quasi-reaction represents the set of reactions that would be
obtained from choosing the remaining \configurations that are neither
in $P$ nor in $A$ as either potential or antipotential.
The set of quasi-reactions is:
\[
  \calQ =\{ (P,A) | P,A \subseteq \calC \textit{ and }  P \cap A=\emptyset \}
\]
Note that $\calR=\{(P,A)\in\calQ | P \cup A =\calC\}.$

\begin{example}
Consider a case with four \configurations $c_0$, $c_1$, $c_2$ and $c_3$.
The quasi-reaction $(\{c_0,c_2\},\{c_1\})$ corresponds to the
following formula:
\[
\exists \overline{x} \textit{. } \big(\exists \overline{y}
\textit{. }f(c_0(\overline{x}, \overline{y})) \wedge \forall \overline{y}
\textit{. } \neg f(c_1(\overline{x}, \overline{y})) \wedge \exists
\overline{y} \textit{. }f(c_2(\overline{x}, \overline{y}))\big)
\]
Note that nothing is stated in this quasi-reaction about $c_3$ (it
neither acts as a potential nor as an antipotential).
\qed
\end{example}

Consider the following order between quasi-reactions:
$(P,A)\qrprec(P',A')$ holds if and only if $P\subseteq P'$ and
$A\subseteq A'$.
It is easy to see that $\qrprec$ is a partial order, that
$(\emptyset,\emptyset)$ is the lowest element and that for every two
elements $(P,A)$ and $(P',A')$ there is a greatest lower bound (namely
$(P\cap P',A\cap A')$). Therefore
$(P,A)\sqcap(P',A')\DefinedAs(P\cap P',A\cap A')$ is a meet operation
(it is associative, commutative and idempotent).
Note that $q\qrprec{}q'$ if and only if $q\sqcap q'=q$.
Formally:

\begin{proposition}
  $(\calQ,\sqcap)$ is a lower semi-lattice.
\end{proposition}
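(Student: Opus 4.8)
The plan is to verify the lower semi-lattice axioms directly, since the surrounding text has already done most of the conceptual work. Recall that a lower semi-lattice is a set equipped with a binary meet operation $\sqcap$ that is associative, commutative, and idempotent, and whose induced order $q \qrprec q'$ iff $q \sqcap q' = q$ has the property that $q \sqcap q'$ is the greatest lower bound of $q$ and $q'$. First I would fix the candidate operation, namely $(P,A) \sqcap (P',A') = (P \cap P', A \cap A')$, and confirm it is well-defined on $\calQ$: given two quasi-reactions, the result is again a quasi-reaction, because $P \cap P' \subseteq \calC$, $A \cap A' \subseteq \calC$, and disjointness is inherited, since $(P \cap P') \cap (A \cap A') \subseteq P \cap A = \emptyset$. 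This closure check is the one place where the definition of quasi-reaction (as opposed to an arbitrary pair of subsets) actually matters.

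Next I would dispatch the algebraic identities by reducing them to the corresponding properties of set intersection componentwise. Associativity, commutativity, and idempotence of $\sqcap$ follow immediately from the fact that $\cap$ on subsets of $\calC$ is itself associative, commutative, and idempotent, applied independently in each of the two coordinates. Concretely, commutativity reads $(P \cap P', A \cap A') = (P' \cap P, A' \cap A)$; idempotence reads $(P \cap P, A \cap A) = (P,A)$; and associativity follows likewise. These are one-line verifications and need no elaboration.

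Then I would establish that the order induced by $\sqcap$ coincides with the stated order $\qrprec$, which is the bridge between the algebraic and order-theoretic views. The excerpt already asserts $(P,A) \qrprec (P',A')$ iff $P \subseteq P'$ and $A \subseteq A'$, and separately that $q \qrprec q'$ iff $q \sqcap q' = q$. I would check this equivalence explicitly: $(P,A) \sqcap (P',A') = (P,A)$ means $P \cap P' = P$ and $A \cap A' = A$, which is exactly $P \subseteq P'$ and $A \subseteq A'$, i.e. $(P,A) \qrprec (P',A')$. With this in hand, the greatest-lower-bound property is immediate: $(P \cap P', A \cap A')$ is a lower bound of both arguments because $P \cap P' \subseteq P, P'$ and similarly for the second coordinate, and any common lower bound $(Q,B)$ satisfies $Q \subseteq P, Q \subseteq P', B \subseteq A, B \subseteq A'$, hence $Q \subseteq P \cap P'$ and $B \subseteq A \cap A'$, so $(Q,B) \qrprec (P \cap P', A \cap A')$.

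There is no genuine obstacle here; the statement is essentially a bookkeeping consequence of the fact that $(2^{\calC}, \cap)$ is a semi-lattice and that $\calQ$ is a subset of $2^{\calC} \times 2^{\calC}$ closed under the componentwise meet. The only point requiring the slightest attention is the closure of $\calQ$ under $\sqcap$ (preservation of the disjointness constraint $P \cap A = \emptyset$), which I would state first so that the remaining axioms can be checked by appealing directly to the componentwise structure inherited from the full power set lattice.
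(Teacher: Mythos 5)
Your proof is correct and follows essentially the same route as the paper, which justifies this proposition only informally in the surrounding text: the meet is componentwise intersection, the algebraic laws are inherited from $\cap$, and the induced order agrees with $\qrprec$ with greatest lower bound $(P\cap P',A\cap A')$. Your explicit check that disjointness ($P\cap A=\emptyset$) is preserved under $\sqcap$ is the one detail the paper leaves implicit, and it is the right point to single out.
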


The quasi-reaction semi-lattice represents how \textit{informative} a
quasi-reaction is.
Given a quasi-reaction $(P,A)$, removing an element from either $P$ or
$A$ results in a strictly less informative quasi-reaction.
The lowest element $(\emptyset,\emptyset)$ contains the least
information.

Given a quasi-reaction $q$, the set
$\calQ_q=\{ q'\in \calQ | q'\qrprec q\}$ of the quasi-reactions below
$q$ form a full lattice with join
$(P,Q)\sqcup(P',Q')\DefinedAs (P\cup P',Q\cup Q')$.
This is well defined because $P'$ and $Q$, and $P$ and $Q'$ are
guaranteed to be disjoint.

\begin{proposition}
  For every $q$, $(\calQ_q,\sqcap,\sqcup)$ is a lattice.
\end{proposition}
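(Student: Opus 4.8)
The plan is to use the standard characterization that a partially ordered set is a lattice precisely when every pair of elements admits both a greatest lower bound (a meet) and a least upper bound (a join). Since $\qrprec$ restricted to $\calQ_q$ inherits reflexivity, antisymmetry and transitivity from $(\calQ,\qrprec)$, it suffices to exhibit, for arbitrary $q_1=(P_1,A_1)$ and $q_2=(P_2,A_2)$ in $\calQ_q$, a meet and a join that again lie in $\calQ_q$ and that have the required universal properties.

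For the meet I would simply reuse the preceding proposition: $q_1\sqcap q_2=(P_1\cap P_2,\,A_1\cap A_2)$ is already the greatest lower bound of $q_1$ and $q_2$ in the semi-lattice $(\calQ,\sqcap)$. Because $\calQ_q$ is the principal ideal below $q$, it is downward closed, so $q_1\sqcap q_2\qrprec q_1\qrprec q$ shows $q_1\sqcap q_2\in\calQ_q$; being the glb in the larger poset, it remains the glb inside $\calQ_q$. Thus no new work is needed for meets.

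The substance of the proof is the join $q_1\sqcup q_2=(P_1\cup P_2,\,A_1\cup A_2)$, and the key point is its well-definedness: I must check that the two components are disjoint so that the pair is genuinely a quasi-reaction. Writing $q=(P_q,A_q)$, the hypotheses $q_1,q_2\qrprec q$ give $P_1,P_2\subseteq P_q$ and $A_1,A_2\subseteq A_q$. Then $(P_1\cup P_2)\cap(A_1\cup A_2)$ is the union of the four cross-intersections $P_i\cap A_j$, each of which is contained in $P_q\cap A_q=\emptyset$ because $q$ is itself a quasi-reaction; hence the intersection is empty and $q_1\sqcup q_2\in\calQ$. The same containments give $P_1\cup P_2\subseteq P_q$ and $A_1\cup A_2\subseteq A_q$, so $q_1\sqcup q_2\qrprec q$ and the join lies in $\calQ_q$. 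That it is the \emph{least} upper bound is then immediate: it is an upper bound of both arguments, and any common upper bound $(P'',A'')$ satisfies $P_1\cup P_2\subseteq P''$ and $A_1\cup A_2\subseteq A''$ componentwise, i.e.\ $q_1\sqcup q_2\qrprec(P'',A'')$.

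Having meets and joins for every pair establishes the lattice; the remaining algebraic identities (commutativity, associativity, idempotence and the absorption laws) need not be verified by hand, as they descend componentwise from the corresponding laws for $\cap$ and $\cup$ on subsets of $\calC$. I expect the only genuine obstacle to be precisely the disjointness check for the join, and it is worth emphasizing that this is exactly where restricting to the ideal $\calQ_q$ is essential: in the full set $\calQ$ the componentwise union of two quasi-reactions can fail to be disjoint, so $\sqcup$ need not be defined globally, whereas the common upper bound $q$ forces the cross-intersections to vanish here.
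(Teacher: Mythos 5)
Your proof is correct and takes essentially the same approach as the paper: the paper's entire justification is the one-line observation that the join $(P_1\cup P_2,\,A_1\cup A_2)$ is well defined because the cross-intersections ($P_1$ with $A_2$, and $P_2$ with $A_1$) are forced to be empty when both arguments lie below $q$, which is exactly the disjointness check you identify as the substance of the argument. The treatment of meets via downward closure of $\calQ_q$ and the verification of the universal properties are the routine parts the paper leaves implicit, and you fill them in correctly.
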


As for reactions, quasi-reactions correspond to a formula in the
theory as follows:
\[
  \qreact_{(P,A)}(\xs) = \bigwedge_{c \in P} \big(\exists \ys. c(\xs,\ys)\big) \wedge \bigwedge_{c \in A} \big(\forall \ys.\neg c(\xs,\ys)\big)
\]
Again, given a quasi-reaction $q$, if $\exists \xbar. \qreact_q(\xbar)$
is valid we say that $q$ is valid, otherwise we say that $q$ is
invalid.
The following holds directly from the definition (and the fact that
adding conjuncts makes a first-order formula ``less satisfiable'').
\begin{proposition}
  \label{prop:below}
  Let $q,q'$ be two quasi-reactions with $q\qrprec{}q'$. If $q$ is
  invalid then $q'$ is invalid. If $q'$ is valid then $q$ is valid.
\end{proposition}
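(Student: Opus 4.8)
The plan is to reduce both implications to a single pointwise entailment between the two quasi-reaction formulae, and then exploit that existential quantification over $\xs$ is monotone.

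First I would unfold the definitions. Writing $q=(P,A)$ and $q'=(P',A')$, the hypothesis $q\qrprec{}q'$ means $P\subseteq P'$ and $A\subseteq A'$. I would split the conjunctions defining $\qreact_{q'}$ according to whether each choice lies in $P$, in $P'\setminus P$, in $A$, or in $A'\setminus A$, obtaining the decomposition
\[
\qreact_{q'}(\xs) \;=\; \qreact_q(\xs) \;\wedge\; \bigwedge_{c \in P' \setminus P} \big(\exists \ys. c(\xs,\ys)\big) \;\wedge\; \bigwedge_{c \in A' \setminus A} \big(\forall \ys. \neg c(\xs,\ys)\big),
\]
so that $\qreact_{q'}$ is literally $\qreact_q$ with extra conjuncts appended. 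The disjointness $P'\cap A'=\emptyset$ guarantees that the four index sets do not overlap, so this split is well-formed.

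Second, from the decomposition it is immediate that for every valuation of $\xs$ we have $\qreact_{q'}(\xs) \Impl \qreact_q(\xs)$: any valuation satisfying a conjunction satisfies each of its sub-conjunctions, in particular $\qreact_q(\xs)$. This is the formal content of the informal remark that ``adding conjuncts makes a formula less satisfiable''. Third, I would lift this entailment through the outer $\exists\xs$. For the second claim, suppose $q'$ is valid, so there is a witness $\vs$ with $\qreact_{q'}(\vs)$ true; by the pointwise entailment $\qreact_q(\vs)$ holds as well, whence $\exists\xs.\qreact_q(\xs)$ is true and $q$ is valid. The first claim is exactly the contrapositive of the second: if $q$ is invalid then no $\xs$ satisfies $\qreact_q$, and since every witness for $\qreact_{q'}$ would be such an $\xs$, none can satisfy $\qreact_{q'}$, so $q'$ is invalid.

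There is essentially no obstacle here; the only points demanding care are the semantics of the word ``valid'' in the theory setting, where it denotes truth of the closed formula $\exists\xs.\qreact_{q}(\xs)$ in the intended model of $\calT$, and the observation that the two displayed implications are contrapositives of one another, so that establishing the single entailment discharges both directions at once.
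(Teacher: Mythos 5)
Your proof is correct and follows exactly the route the paper takes: the paper simply asserts the proposition ``holds directly from the definition (and the fact that adding conjuncts makes a first-order formula less satisfiable),'' and your decomposition of $\qreact_{q'}$ into $\qreact_q$ plus extra conjuncts, followed by lifting the pointwise entailment through $\exists\xs$, is precisely that observation spelled out in full.
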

These results enable the following optimizations.

\subsection{Quasi-reaction-based Optimizations}
\label{subsec:quasiOptimizations}

\subsubsection{A Logic-based Optimization.}
Consider that, during the search for valid reactions in the main loop,
a reaction $(P,A)$ is found to be invalid, that is $\react_{(P,A)}$ is
unsatisfiable.
If the algorithms explores the quasi-reactions below $(P,A)$, finding
$(P',A')\qrprec(P,A)$ such that $\qreact_{(P',A')}$, then by
Prop.~\ref{prop:below}, every reaction $(P'',A'')$ above $(P',A')$ is
guaranteed to be invalid.
This allows to prune the search in the main loop by computing a more
informative quasi-reaction $q$ after an invalid reaction $r$ is found,
and skipping all reactions above $q$ (and not only $r$).
For example, if the reaction corresponding to
$(\{c_0,c_2,c_3\},\{c_1\})$ is found to be invalid, and by exploring
quasi-reactions below it, we find that $(\{c_0\},\{c_1\})$ is also
invalid, then we can skip all reactions above $(\{c_0\},\{c_1\})$.
This includes for example $(\{c_0,c_2\},\{c_1,c_3\})$ and
$(\{c_0,c_3\},\{c_1,c_2\})$.
In general, the lower the invalid quasi-reaction in $\qrprec$, the more
reactions will be pruned.
This optimization resembles a standard choosing of max/min elements in
an anti-chain.

\subsubsection{A Game-based Optimization.}
Consider now two reactions $r=(P,A)$ and $r'=(P',A')$ such that
$P\subseteq P'$ and assume that both are valid reactions.
Since $r'$ allows more \configurations to the system (because the
potentials $P$ determine these choices), the environment player will
always prefer to play $r$ than $r'$.
Formally, if there is a winning strategy for the environment that
chooses values for $\xbar$ (corresponding to a model of $\react_r$),
then choosing values for $\xbar'$ instead (corresponding to a model of
$\react_{r'}$) will also be winning.

Therefore, if a reaction $r$ is found to be valid, we can prune the
search for reactions $r'$ that contain strictly more potentials,
because even if $r'$ is also valid, it will be less interesting for
the environment player.
For instance, if $(\{c_0,c_3\},\{c_1,c_2\})$ is valid, then
$(\{c_0,c_1,c_3\},\{c_2\})$ and $(\{c_0,c_1,c_3,c_2\},\{\})$ become
uninteresting to be explored and can be pruned from the search.

\subsection{A Single Model-loop Algorithm (Alg.~\ref{algoModelLoop})}

We present now a faster algorithm that replaces the main loop of
Alg.~\ref{algoBruteForce} that performs exhaustive exploration with a
SAT-based search procedure that prunes uninteresting reactions.
In order to do so, we use a SAT formula $\psi$ with one variable $z_i$
per \configuration $c_i$, in a DPLL(T) fashion.
An assignment $v:\Vars(\psi)\Into\Bool$ to these variables represents
a reaction $(P,A)$ where
\[
  P=\{ c_i |  v(z_i)=\True\} \hspace{3em}  A=\{ c_j |  v(z_j)=\False \} 
\]
Similarly, a partial assignment $v:\Vars(\psi)\Part\Bool$ represents a
quasi-reaction.
The intended meaning of $\psi$ is that its models encode the set of
interesting reactions that remain to be explored.
This formula is initialized with $\psi=\True$ (note that
$\neg (\bigwedge_{z_i} \neg z_i)$ is also a correct starting point
because the reaction where all \configurations
are antipotentials is invalid).
Then, a SAT query is used to find a satisfying assignment for $\psi$,
which corresponds to a (quasi-)reaction $r$ whose validity is
\begin{wrapfigure}[25]{l}{0.46\textwidth}
  \begin{minipage}{0.46\textwidth}
    \vspace{-2.3em}
\begin{algorithm}[H]
  \AlgModelLoop
  \caption{Model-loop}
  \label{algoModelLoop}
\end{algorithm}
\end{minipage}
\end{wrapfigure}
interesting to be explored.
Alg.~\ref{algoModelLoop} shows the Model-loop algorithm.
The three
main building blocks of the model-loop algorithm are:
\begin{compactenum}[(1)]
\item Alg.~\ref{algoModelLoop} stops when $\psi$ is invalid (line $14$).
\item To explore a new reaction, Alg.~\ref{algoModelLoop} obtains a
  satisfying assignment for $\psi$ (line $15$).
\item Alg.~\ref{algoModelLoop} checks the validity of the reaction
  (line $16$) and enriches $\psi$ o prune according to what can be
  learned, as follows:
\begin{itemize}
\item If the reaction is invalid (as a result of the SMT query in line
  $16$), then it checks the validity of quasi-reaction
  $q=(\emptyset,A)$ in line $23$.  If $q$ is invalid, add the negation
  of $q$ as a new conjunction of $\psi$ (line $26$).  If $q$ is valid,
  add the negation of the reaction (line $24$).
  This prevents all SAT models that agree with one of these $q$, which
  correspond to reactions $q\qrprec{}r'$, including $r$.
\end{itemize}

\begin{itemize}
\item If the reaction is valid, then it is added to the set of valid
  reactions \VR and the corresponding quasi-reaction that results from
  removing the antipotentials is added (negated) to $\psi$ (line
  $18$), preventing the exploration of uninteresting cases, according
  to the game-based optimization.
  \end{itemize}
 \end{compactenum}

As for the notation in Alg.~\ref{algoModelLoop} (also in
Alg.~\ref{algoNestedSAT} and Alg.~\ref{algoInnerLoop}),
\textit{model($\psi$)} in line $15$ is a function that returns a
satisfying assignment of the SAT formula $\psi$,
\textit{posVars(m)} returns the positive variables of $m$ (e.g.,
$c_i, c_j$ etc.) and \textit{negVars(m)} returns the negative
variables. 
Finally, $ \textit{toTheory}(m, \mathcal{C}) = \bigwedge_{m_i} c_i^p \wedge \bigwedge_{\neg m_i} c_i^a$
(in lines $16$ and $23$) translates a Boolean formula into its
corresponding formula in the given $\mathcal{T}$ theory.
Note that unsatisfiable $m$ can be minimized finding cores. 

If $r$ is invalid and $(\emptyset,A)$ is found also to be
invalid, then exponentially many cases can be pruned.
Similarly, if $r$ is valid, also exponentially many cases can be pruned.
The following result shows the correctness of Alg.~\ref{algoModelLoop}:

\begin{theorem}
  Alg.~\ref{algoModelLoop} terminates and outputs a correct Boolean
  abstraction.
\end{theorem}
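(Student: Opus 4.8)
The plan is to prove two separate properties: termination and correctness (i.e., that the algorithm computes exactly the set of valid reactions needed for a correct Boolean abstraction, as characterized by Theorem~\ref{temporalGlobalTheorem}).

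For termination, the key observation is that the formula $\psi$ is monotonically strengthened across iterations: every branch of the loop body conjoins $\psi$ with the negation of some (quasi-)reaction that is satisfied by the current model $m$. I would first argue that each iteration strictly reduces the set of models of $\psi$. In the valid-reaction branch (line $18$) we add $\neg(\bigwedge_{p\in P}p)$, which rules out at least $m$ itself; in both invalid sub-branches (lines $24$ and $26$) we likewise add a clause falsified by $m$. Since there are only $2^{|\calC|}$ total assignments, and each iteration removes at least one satisfying assignment, the loop executes at most $2^{|\calC|}$ times before $\psi$ becomes unsatisfiable and the stop criterion (line $14$) fires. This establishes termination; the main subtlety to check is that the added clause is genuinely falsified by $m$ in each branch, so that no iteration can revisit the same model.

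For correctness, I would show that the set \textit{VR} produced equals the set of valid reactions up to the game-based pruning, i.e., that \textit{VR} contains exactly the $\qrprec$-minimal-potential valid reactions that suffice to reconstruct $\varphi_{\textit{VR}}$. The argument splits into soundness and completeness. For \emph{soundness}, every pair added to \textit{VR} (line $18$) passes the SMT validity check at line $16$, so it is genuinely a valid reaction. For \emph{completeness}, I would argue that no \emph{interesting} valid reaction is erroneously pruned. This rests on the two optimizations justified earlier: the clause added in the valid branch only prunes reactions with strictly more potentials (game-based optimization), which by the argument preceding this section are dominated by the retained reaction from the environment's perspective; and the clauses added in the invalid branches only prune reactions that are themselves invalid. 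For the latter, the case split at line $23$ is the crucial point: if the quasi-reaction $q=(\emptyset,A)$ is itself invalid, then by Prop.~\ref{prop:below} every reaction above $q$ is invalid, so the stronger pruning clause $\neg\,\textit{fh}$ (line $26$) is sound; if $q$ is valid, we can only safely remove $m$ itself (line $24$), since reactions above $q$ need not be invalid.

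The main obstacle I expect is making the notion of ``correct Boolean abstraction'' precise enough to close the completeness direction. Concretely, one must show that the pruned valid reactions are redundant for realizability, not merely for reconstructing the full set of valid reactions. The cleanest route is to appeal to the game-based optimization: if a valid reaction $r'$ with more potentials than a retained $r$ is dropped, then by the domination argument the environment never benefits from playing $r'$, so $\phiEx$ built from \textit{VR} induces the same winning condition for the system. Combined with Lemma~\ref{lemmNotEmptyReactions} (every retained valid reaction has a nonempty set of potentials, so the system always has a response) and Theorem~\ref{temporalGlobalTheorem} (which reduces equi-realizability to the simulation argument on the constructed games), this yields that the output $\varphi'\wedge\square(A\Impl\phiEx)$ is equi-realizable to $\phiT$. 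The residual technical work is verifying that \textit{toTheory} correctly translates each Boolean model into its intended $\exists^*\forall^*$ query, so that the SMT validity checks at lines $16$ and $23$ decide exactly reaction and quasi-reaction validity as defined.
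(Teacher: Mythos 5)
Your proposal is correct and follows essentially the same approach as the paper's proof: termination via the bound of $2^{|\calC|}$ assignments with at least one model of $\psi$ removed per iteration, and correctness via the observation that every valid reaction is either found by the algorithm or dominated by (pruned in favour of) a more promising valid reaction that is found, the invalid-branch prunings being sound by Prop.~\ref{prop:below}. The paper states this only as a brief sketch; your version fills in the branch-by-branch details and makes explicit the appeal to the game-based domination argument, Lemma~\ref{lemmNotEmptyReactions}, and Theorem~\ref{temporalGlobalTheorem}, but the underlying argument is the same.
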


\begin{proof}(Sketch). Alg.~\ref{algoModelLoop} terminates because, at
  each step in the loop, $\psi$ removes at least one satisfying
  assignment and the total number is bounded by $2^{|\calC|}$.
  Also, the correctness of the generated formula is guaranteed
  because, for every valid reaction in Alg.~\ref{algoBruteForce},
  either there is a valid reaction found in Alg.~\ref{algoModelLoop}
  or a more promising reaction found in Alg.~\ref{algoModelLoop}.
  \qed
\end{proof}

\subsection{A Nested-SAT algorithm (Alg.~\ref{algoNestedSAT})}
%

We now present an improvement of Alg.~\ref{algoModelLoop} that
performs a more detailed search for a promising collection of invalid
quasi-reactions under an invalid reaction $r$.
\begin{wrapfigure}[21]{l}{0.46\textwidth}
  \begin{minipage}{0.46\textwidth}
    \vspace{-2.3em}
\begin{algorithm}[H]
  \AlgNestedSAT
  \caption{Nested-SAT}
  \label{algoNestedSAT}
\end{algorithm}
\end{minipage}
\end{wrapfigure}
Note that it is not necessary to find the precise collection of all
the smallest quasi-reactions that are under an invalid reaction $r$,
as long as at least
one quasi-reaction under $r$ is calculated
(perhaps, $r$ itself).
Finding lower quasi-reactions allow to prune more, but its calculation
is more costly, because more SMT queries need to be performed.
The Nested-SAT algorithm (Alg.~\ref{algoNestedSAT}) explores (using an
inner SAT encoding) this trade-off between computing more exhaustively
better invalid quasi-reactions and the cost of the search.
The three
main building blocks of the nested-SAT algorithm (see
Alg.~\ref{algoNestedSAT}) are:
\begin{itemize}
\item[(1)] It stops when $\psi$ is invalid (as in
  Alg.~\ref{algoModelLoop}), in line $33$.
\item[(2)] To get the reaction, obtain a satisfying assignment $m$ for
  $\psi$ (as in Alg.~\ref{algoModelLoop}), in line $34$.
\end{itemize}

\begin{itemize}
\item[(3)] Check the validity of the corresponding reaction and prune
  $\psi$ according to what can be learned as follows.
  If the reaction is valid, then we proceed as in
  Alg.~\ref{algoModelLoop}.
  If $r=(P,A)$ is invalid (as a result of the SMT query), then an
  inner SAT formula encodes whether a \configuration is masked
  (eliminated from $P$ or $A$).
  Models of the inner SAT formula, therefore, correspond to
  quasi-reactions below $r$.
  If a quasi-reaction $q$ found in the inner loop is invalid, the inner
  formula is additionally constrained and the set of invalid
  quasi-reactions is expanded.
  If a quasi-reaction $q$ found is valid, then the inner SAT formula
  is pruned eliminating all quasi-reactions that are guaranteed to be
  valid.
  At the end of the inner loop, a (non-empty) collection of invalid
  quasi-reactions are added to $\psi$.
\end{itemize}

The inner loop, shown in Alg.~\ref{algoInnerLoop} (where \textit{VQ}
stands for \textit{valid quasi-reactions}),
\begin{wrapfigure}[17]{l}{0.45\textwidth}
  \begin{minipage}{0.45\textwidth}
    \vspace{-2.3em}
\begin{algorithm} [H]
  \AlgInnerLoop
 \caption{Inner loop}
 \label{algoInnerLoop}
\end{algorithm}
\end{minipage}
\end{wrapfigure}
explores a full lattice.
Also, note that $\neg (\bigwedge_{z_i} \neg z_i)$ is, again, a
correct starting point.
Consider, for example, that the outer loop finds
$(\{c_1,c_3\},\{c_0,c_2\})$ to be invalid and that the inner loop
produces assignment $w_0\And w_1\And w_2 \And \neg w_3$.
This corresponds to $c_3$ being masked producing quasi-reaction
$(\{c_1\},\{c_0,c_2\})$.
The pruning system is the following:

\begin{compactitem}
\item If quasi-reaction $q$ is valid then the inner SAT formula is
  pruned eliminating all inner models that agree with the model in the
  masked \configurations.
  In our example, we would prune all models that satisfy $\neg w_3$ if
  $q$ is valid (because the resulting quasi-reactions will be
  inevitably valid).
\end{compactitem}

\begin{compactitem}
\item If quasi-reaction $q$ is invalid, then we prune in the inner
  search all quasi-reactions that mask less than $q$, because these
  will be inevitably invalid.
  In our example, we would prune all models satisfying
  $\neg (w_0\And w_1\And w_2)$.
 \end{compactitem}
%
%
%
Note that 
$\textit{toTheory\_inn}(u, m, \mathcal{C}) = \bigwedge_{m_i \wedge u_j} c_i^p \wedge \bigwedge_{\neg m_i \wedge u_j} c_i^a$
is not the same function as the $\textit{toTheory()}$ used in Alg.~\ref{algoModelLoop} and Alg.~\ref{algoNestedSAT}, 
since the inner loops needs both model $m$ and mask $u$ (which makes no sense to be negated)
to translate a Boolean formula into a $\mathcal{T}$-formula.
Also, note that there is again a trade-off in the inner loop because an
exhaustive search is not necessary.
Thus, in practice, we also used some basic heuristics: (1) entering the
inner loop only when $(\emptyset,A)$ is invalid; (2) fixing a maximum
number of inner model queries per outer model with the possibility to decrement this amount
dynamically with a decay; and (3) reducing the number of times the
inner loop is exercised (e.g., \textit{enter the inner loop only if the number
of invalid outer models so far is even}).

\begin{example}
  \label{ex:alg3}
  We explore the results of Alg.~\ref{algoNestedSAT}.
  A possible execution for 2 literals can be as follows:
\begin{enumerate}
\item Reaction $(\{c_0,c_3\},\{c_1,c_2\} )$ is obtained in line $34$,
  which is declared invalid by the SMT solver in line $35$.
  The inner loop called in line $42$ produces $(\{c_0\},\{c_1\} )$,
  $(\{c_3\},\{c_2\} )$ and $(\{\},\{c_1,c_2\} )$ as three invalid
  quasi-reactions, and their negations are added to the SAT formula of
  the outer loop in line $43$.
\item A second reaction $(\{c_0,c_1\},\{c_3,c_4\} )$ is obtained from
  the SAT solver in line $34$, and now the SMT solver query is valid
  in line $35$.
  Then, $\neg(c_0\And c_1)$ is added to the outer SAT formula in line $37$.
\item A third reaction $(\{c_2,c_3\},\{c_0,c_1\} )$ is obtained in
  line $33$ , which is again valid in line $35$.
  Similarly, $\neg(c_2\And c_3)$ is added the outer SAT formula in
  line $37$.
\item A fourth reaction $(\{c_1,c_2\},\{c_0,c_3\} )$ is obtained in
  line $33$, which is now invalid (line $35$).
  The inner loop called in line $42$ generates the following cores:
  $(\{c_1\},\{c_0\} )$ and $(\{c_2\},\{c_3\})$.
  The addition of the negation of these cores leads to an
  unsatisfiable outer SAT formula, and the algorithm terminates.
  \end{enumerate}
  
The execution in this example has performed 4 SAT+SMT queries in the
outer loop, and 3+2 SAT+SMT queries in the inner loops.
The brute-force Alg.~\ref{algoBruteForce} would have performed 16 queries.
Note that the difference between the exhaustive version and the optimisations 
soon increases exponentially when we consider specifications with more literals.

\qed
\end{example}


\section{Empirical evaluation}
\label{sec:empirical}

%
We perform an empirical evaluation on six specifications 
inspired by real industrial cases: \textit{Lift} (\textit{Li.}), \textit{Train}
(\textit{Tr.}), \textit{Connect} (\textit{Con.}), \textit{Cooker}
(\textit{Coo.}), \textit{Usb} (\textit{Usb}) and \textit{Stage}
(\textit{St.}), and a
synthetic 
example (\textit{Syn.}) with versions from 2 to 7 literals.
For the implementation, we used used Python $3.8.8$ with Z3 $4.11$.

\TableBenchmark

It is easy to see that ``clusters'' of literals that do not share
variables can be Booleanized independently, so we split into clusters
each of the examples.
We report our results in Fig.~\ref{tab:benchmark}.
Each row contains the result for a cluster of an experiment (each one
for the fastest heuristic).
Each benchmark is split into clusters, where we show the number of
variables (\textit{vr}.) and literals (\textit{lt.}) per cluster.
We also show running times of each algorithm against each cluster;
concretely, we test Alg.~\ref{algoBruteForce} (\textit{BF}), Alg.~\ref{algoModelLoop} (\textit{SAT}) and
Alg.~\ref{algoNestedSAT} (\textit{Doub}.).
For Alg.~\ref{algoModelLoop} and Alg.~\ref{algoNestedSAT}, we show the number of queries performed; in the
case of Alg.~\ref{algoNestedSAT}, we also show both outer and inner queries.
Alg.~\ref{algoBruteForce} and Alg.~\ref{algoModelLoop} require no heuristics.
For Alg.~\ref{algoNestedSAT}, we report, left to right: maximum number
of inner loops (\textit{MxI.}), the modulo division criteria
(\textit{Md.})\footnote{This means that the inner loop is entered if and
  only if the number of invalid models so far is divisible by
  \textit{Md}, and we found \textit{Md} values of $2$, $3$ and $20$ to be interesting.}, the number of queries after which we perform a decay
of $1$ in the maximum number of inner loops (\textit{Dc.}), and if we
apply the invalidity of $(\emptyset, A)$ as a criteria to enter the
inner loop ($A.$), where $\checkmark$ means that we do and $\times$
means the contrary.
 Also,  $\perp$ means timeout (or \textit{no data}).

 The brute-force (BF) Alg.~\ref{algoBruteForce} performs well with 3
 or fewer literals, but the performance dramatically decreases with 4
 literals.
 Alg.~\ref{algoModelLoop} (single SAT) performs well up to 4 literals,
 and it can hardly handle cases with 6 or more literals.
An exception is \textit{Lift (1,7)} which is simpler since it has only
one variable (and this implies that there is only one player).
The performance improvement of SAT with respect to BF is due to the
decreasing of queries.
For example, \textit{Train (3,6)} performs $13706$ queries, whereas BF
would need $2^{2^6}=1.844\cdot10^{18}$ queries.

All examples are Booleanizable when using Alg.~\ref{algoNestedSAT} (two SAT loops),
particularly when using a combination of concrete heuristics.
For instance, in small cases (2 to 5 literals) it seems that
heuristic-setups like $3/3/3/0/\checkmark$\footnote{This means: we
  only perform 3 inner loop queries per outer loop query (and there is
  no decay, i.e., $decay=0$), we enter the inner loop once per 3 outer
  loops and we only enter the inner loop if $(\emptyset, A)$ is
  invalid.} are fast, whereas in bigger cases other setups like
$40/2/0/\checkmark$ or $100/40/20/\times$ are faster.
We conjecture that a non-zero decay is required to handle large
inputs, since inner loop exploration becomes less useful after some
time.
However, adding a decay is not always faster than fixing a number of
inner loops (see \textit{Syn (2,7)}), but it always yields better
results in balancing the number of queries between the two nested SAT
layers.
Thus, since balancing the number of queries typically leads to faster
execution times, we recommend to use decays.
Note that we performed all the experiments reported in this section running all
cases several times and computing averages, because Z3 exhibited a big
volatility in the models it produces, which in turn influenced the
running time of our algorithms.
This significantly affects the precise reproducibility of the running times.
For instance, for \textit{Syn(2,5)} the worst case execution was
almost three times worst than the average execution reported in
Fig.~\ref{tab:benchmark}.
Studying this phenomena more closely is work in progress.
Note that there are cases in which the number of queries of \textit{SAT} and \textit{Doub}. are the same 
(e.g., \textit{Usb(3,5)}), 
which happened when the \textit{A.} heuristic had the effect of making the search not to enter the inner loop.
%

In Fig.~\ref{tab:benchmark} we also analyzed the constructed $\phiB$, 
measuring the number of valid reactions from which it is made (\textit{Val.}) 
and the time (\textit{Tme.}) that a realizability checker  takes to verify whether 
$\phiB$ (hence, $\phiT$) is realizable or not (expressed with dark and light gray colours, respectively).
We used Strix \cite{meyer18strix} as the
realizability checker.
As we can see, there is a correspondence between the expected realizability in $\phiT$ and the
realizability result that Strix returns in $\phiB$.
Indeed, we can see all instances can be solved in less than $7$
seconds, and the length of the Boolean formula (characterized by the
number of valid reactions) hardly affects performance.
This suggests that future work should be focused on reducing time
necessary to produce Boolean abstraction to scale even further.

Also, note that Fig.~\ref{tab:benchmark} shows remarkable results as
for ratios of queries required with respect to the (doubly
exponential) brute-force algorithm: e.g., $4792+9941$ (outer + inner
loops) out of the $1.844 \cdot 10^{19}$ queries that the brute-force
algorithm would need, which is less than its $1 \cdot 10^{-13}\%$ (see
Fig.~\ref{tab:coverage} for more details).
We also compared the performance
and number of queries for two different theories
$\mathcal{T}_{\mathbb{Z}}$ and $\mathcal{T}_{\mathbb{R}}$
for \textit{Syn (2,3)} to \textit{Syn (2,6)}.
Note, again, that the realizability result may vary if a specification is
interpreted in different theories, but this is not relevant for the experiment in Fig.~\ref{tab:theoryComparison},
which suggests that time results are not dominated by the SMT solver; 
but, again, from the enclosing abstraction algorithms.
\begin{figure}[t]
  \centering
\begin{tabular}{|c | c | c | c | c |} 
 \hline
 Lits & Alg. & Performed queries (out+inn) & Out of & Needed queries ($\simeq\%$) \\ 
 \hline\hline
 $2$ & Alg 2 & $4$ & $16$ & $25$ \\ 
 \hline
 $3$ & Alg 2 & $8$ & $256$ & $3.125$ \\
 \hline
 $4$ & Alg 3 & $83+380$ & $65536$ & $0.709$ \\
 \hline
 $5$ & Alg 3 & $380+2800$ & $4294967296$ & $7.404 \cdot 10^{-5}$ \\
 \hline
 $6$ & Alg 3 & $4792+9941$ & $1.844 \cdot 10^{19}$ & $1 \cdot 10^{-13}$ \\
 \hline
 ... & ... & ... &  ... & ... \\
 \hline
 $12$ & Alg 3 & $2728+40920$ &  $\infty$ & $0$ \\
 \hline
\end{tabular}
\caption{Best numbers of queries for Alg.~2 and 3 relative to
  brute-force (Alg.1).}
\label{tab:coverage}
\end{figure}
%
%
\begin{figure}[t]
  \centering
\begin{tabular}{|c|c|c c|c c|} 
 \hline
 \multirow{2}{*}{Lits} & \multirow{2}{*}{Heuristic} & \multicolumn{2}{|c|}{$\mathcal{T}_{\mathbb{Z}}$} & \multicolumn{2}{c|}{$\mathcal{T}_{\mathbb{R}}$} \\
 & setup & Time (s) & Queries (ou/in) & Time (s) & Queries (ou/in) \\
 \hline\hline
 $3$ & $10/2/0/\checkmark$ & $0.63$ & $8 / 30$ & $0.90$ & $14 / 40$ \\ 
 \hline
 $4$ & $10/2/0/\checkmark$ &  $16.14$ & $308 / 500$ & $11.19$ & $125 / 560$ \\
 \hline
  $5$ & $20/2/0/\checkmark$ &  $62.44$ & $408 / 3220$ & $88.55$ & $357 / 3460$ \\
 \hline
  $6$ & $40/2/0/\checkmark$ &  $678.71$ & $2094 / 32760$ & $722.64$ & $1862 / 35840$ \\
 \hline
\end{tabular}
\caption{Comparison of $\ThZ$ and $\ThR$ for \textit{Syn (2,3)} to \textit{Syn (2,6).}}
\label{tab:theoryComparison}
\end{figure}

\section{Related Work and Conclusions} 

\subsubsection{Related work.}
Constraint LTL
\cite{demriDSouza2002automataTheoreticApproachConstraintLTL} extends
LTL with the possibility of expressing constraints between variables
at bounded distance (of time).
The theories considered are a restricted form of $\ThZ$ with only
comparisons with additional restrictions to overcome undecidability.
In comparison, we do not allow predicates to compare variables at
different timesteps, but we prove decidability for all theories with an
$\exists^*\forall^*$ decidable fragment.
LTL modulo theories is studied in
\cite{gianolaETAL2022LTLmoduloTheoriesOverFiniteT,faranKupferman2022LTLwithArithmeticApplicationsR} for finite traces
and they allow temporal operators within predicates, leading the logic to
undecidability.

As for works closest to ours,~\cite{cheng2013numerical} proposes
numerical LTL synthesis using an interplay between an LTL synthesizer
and a non-linear real arithmetic checker.
However,~\cite{cheng2013numerical} overapproximates the power of the
system and hence it is not precise for realizability.
Linear arithmetic games are studied in
\cite{azadehKincaid2017strategySynthesisLinearArithmetic} 
introducing algorithms for synthesizing winning strategies for
non-reactive specifications.
Also, \cite{katisETAL2016synthesisAssumeGuaranteeContracts} considers infinite theories (like us), 
but it does not guarantee success or termination, whereas our Boolean abstraction is complete. 
They only consider safety, while our approach considers all LTL.
The follow-up \cite{katisETAL2018validityGuidedSynthesis} has still similar limitations: 
only liveness properties that can be reduced to safety are accepted, and guarantees termination only for the unrealizability case.
Similarly, \cite{gacekETAL2015towardsRealizabilityCheckingContracts} is incomplete, 
and requires a powerful solver for many quantifier alternations, which can be reduced to 1-alternation, 
but at the expense of the algorithm being no longer sound for the unrealizable case (e.g., depends on Z3 not answering ``unknown'').
As for \cite{walkerRyzhyk2014predicateAbstractionReactiveS}, it  (1) only considers safety/liveness GR(1) specifications, 
(2) is limited to the theory of fixed-size vectors and requires (3) quantifier elimination (4) and guidance. 
We only require $\exists^*\forall^*$-satisfiability (for Boolean abstraction) and we consider multiple infinite theories. 
The usual main difference is that Boolean abstraction generates a (Boolean) LTL specification so that existing tools 
can be used with any of their internal techniques and algorithms (bounded synthesis, for example) 
and will automatically benefit from further optimizations. 
Moreover, it preserves fragments like safety and GR(1) so specialized solvers can be used. 
On the contrary, all approaches above adapt one specific technique and implement it in a monolithic way. 

Temporal Stream Logic
(TSL)~\cite{finkbeinerETAL2017temporalStreamLogicSynthesisBeyondBools}
extends LTL with complex data that can be related accross time, making
use of a new \textit{update} operator
$\llbracket y \mapsfrom f x \rrbracket$, to indicate that $y$ receives
the result of applying function $f$ to variable $x$.
TSL is later extended to theories in
\cite{finkbeinerETAL2021temporalStreamLogicModuloTheories,maderbacherBloem2021reactiveSynthesisModuloTheoriesAbstraction}.
In all these works, realizability is undecidable.
Also, in \cite{wonhyukETAL2022canSynthesisSyntaxBeFriends} reactive
synthesis and syntax guided synthesis
(SyGuS)~\cite{rajeevETAL2013syntaxGuidedSynthesis} collaborate in the
synthesis process, and generate executable code that guarantees
reactive and data-level properties.
It also suffers from undecidability: both due to the undecidability of
TSL \cite{finkbeinerETAL2017temporalStreamLogicSynthesisBeyondBools}
and of SyGus~\cite{caulfieldETAL2015whatsDecidableAboutSyntaxGuidedS}.
In comparison, we cannot relate values accross time but we
provide a decidable realizability procedure.

Comparing TSL with \LTLt, TSL is undecidable already for safety, the
theory of equality and Presburger arithmetic. More precisely, TSL is
only known to be decidable for three fragments (see Thm. 7 in
\cite{finkbeinerETAL2021temporalStreamLogicModuloTheories}).
TSL is (1) semi-decidable for the reachability fragment of TSL (i.e.,
the fragment of TSL that only permits the next operator and the
eventually operator as temporal operators); (2) decidable for formulae
consisting of only logical operators, predicates, updates, next
operators, and at most one top-level eventually operator; and (3)
semi-decidable for formulae with one cell (i.e., controllable
outputs).
All the specifications considered for empirical evaluation in
Section~\ref{sec:empirical} are not within the considered decidable or
semi-decidable fragments.
Also, TSL allows (finite) uninterpreted predicates, whereas we need to
have predicates well defined within the semantics of theories of
specifications for which we perform Boolean abstraction.

\subsubsection{Conclusion.}
The main contribution of this paper is to show that \LTLt is
decidable via a Boolean abstraction technique for all theories of data
with a decidable $\exists^*\forall^*$ fragment.
Our algorithms create, from a given \LTLt specification where atomic
propositions are literals in such a theory, an equi-realizable
specification with Boolean atomic propositions.
We also have introduced efficient algorithms using SAT solvers for
efficiently traversing the search space.
A SAT formula encodes the space of reactions to be explore and our
algorithms reduce this space by learning uninteresting areas from each
reaction explores.
The fastest algorithm uses a two layer SAT nested encoding, in a DPLL(T) fashion.
This search yields dramatically more efficient running times and makes
Boolean abstraction applicable to larger cases.
We have performed an empirical evaluation of implementations of our
algorithms.
We found empirically that the best performances are obtained when
there is a balance in the number of queries made by each layer of the
SAT-search.
To the best of our knowledge, this is the first method to propose a
solution (and efficient) to realizability for general
$\exists^*\forall^*$ decidable theories, which include, for instance,
the theories of integers and reals.

Future work includes first how to improve scalability further.
We plan to leverage quantifier elimination procedures \cite{cooper1972theoremProving}
to produce candidates for the sets
of valid reactions and then check (and correct) with faster algorithms.
Also, optimizations based in quasi-reactions can be enhanced if
state-of-the-art tools for satisfiability core search 
(e.g., \cite{liffitonETAL2016fastFlexibleMUSEnum,bendikMeel2021countingMinimalUnsatisfiableS,bendikMeel2021countingMaximalUnsatisfiableS}) are used.
Another direction is to extend our realizability method into a
synthesis procedure by synthesizing functions in
$\calT$ to produces witness values of variables controlled by
the system given (1) environment and system moves in the Boolean game,
and (2) environment values (consistent with the environment move).
Finally, we plan to study how to extend \LTLt with controlled transfer
of data accross time preserving decidability. 

\vfill

\pagebreak


\bibliographystyle{plain}
\bibliography{references}

\vfill
\pagebreak
\appendix

\section{More about empirical evaluation} \label{appSec:moreEmpirical}


%

In this paper, we only optimized heuristics (of Alg.~3) with respect to time and,
even if current evidence suggests that the number of valid reactions is not relevant, 
it could be the case it is relevant for other kind of formulae to be evaluated.
Thus, note that different heuristics yield different $\phiB$ that can be more succint 
(e.g., produce much less valid reactions): for instance, 
using the $100/20/40/\times$ heuristic-setup for \textit{Train(4,5)} took $4144$ seconds and produced $24$ valid reactions; 
whereas using the $20/2/0/\checkmark$ setup took $6328$ seconds, but produced $17$ valid reactions.
This means that the difference between using a set of heuristics 
or another one is not only performance of Boolean abstraction method, the difference can be as great 
as the (actual) possibility of performing realizability checking. 
Studying this phenomena is also reported as a work in progress: it might be the case that it is overall faster to spend more time 
on the $\LTLt$ to LTL encoding if this results in a formula with fewer valid reactions 
whose realizability result can be obtained faster.

Fig.~\ref{tab:coverage} contains the best ratios of queries required with
respect to the (doubly exponential) brute-force algorithm.
In Fig.~\ref{tab:theoryComparison}, we also compare the performance
and number of queries for \textit{Syn (2,3)} to \textit{Syn (2,6)} for theories
$\mathcal{T}_{\mathbb{Z}}$ and $\mathcal{T}_{\mathbb{R}}$.
%
Note, again, that the realizability result may vary if a specification is
interpreted in different theories, but this is not relevant for the experiment in Fig.~\ref{tab:theoryComparison}.

Also, we tried to replicate results of Fig.~2 using the AbsSynthe (safety) checker \cite{brenguierGuillermo2014AbsSynthe},
because of the fact that Past LTL~\cite{Gabbay1980OnTT} is more succint than LTL \cite{markey2003temporalLogicPastExponentiallySuccint} and
because it obtained remarkable results in the Reactive Synthesis Competition editions it participated 
\footnote{See \url{http://www.syntcomp.org/} from 2014 to 2020, both included.}.
Thus, we adapted some of our benchmark for realizability to the AIGER standard \footnote{See at \url{http://fmv.jku.at/aiger/} the main page.} 
format \cite{biere2007AIGERandInvertedGraph} using 
Py-Aiger PLTL \footnote{See \url{https://github.com/mvcisback/py-aiger-past-ltl} for more information.}.
However, even if early results in AbsSynthe are promising 
(e.g., it lasted $0.06$ seconds to solve \textit{Syn(2,2)} instead of the $4.12$ seconds of Strix 
\footnote{Note that we used a virtual machine for Strix, whereas we executed AbsSynthe locally.}), 
the parsing process is too expensive and we got timeouts for $\phiB$ 
containing many valid reactions such as \textit{Stages(8,8)}. 
Comparing generated $\phiB$  with different realizability checkers 
(including AbsSynthe)  is another interesting future research line.
\section{Benchmarks' literals}


We show literals that compose each cluster, 
together with a minimal description about the specification from which they have been extracted.
Note that original names of variables and the rest of the specification 
(which includes state enumerated variables and Boolean variables) are not shown.
Nevertheless, they are all safety specifications.

\subsection{Industrial case 1: Lift}

Lift is part of a set of specifications that describes the functioning of a freight elevator system.
%
%
It is divided into 4 clusters, interpreted in $\mathcal{T}_{\mathbb{R}}$.

\subsubsection{Cluster 1: \textit{Lift (1,7)}.} This cluster contains 1 variable (which belongs to the system) and 7 literals. Concretely:
\[
\begin{array}{l@{\hspace{3em}}l}
  lit_1 = (v_0 = c_{21}) \\
  lit_4 = (v_0 = c_{19}) \\
  lit_5 = (v_0 = c_{20}) \\
  lit_7 = (v_0 = c_{22}) \\
  lit_{11} = (v_0 = c_{17}) \\
  lit_{12} = (v_0 = c_{18}) \\
  lit_{25} = \bigwedge(v_0 \geq 0, v_0 \leq 4)
\end{array}
\]

Note that the $c$ are just predefined constants and their value may affect the realizability result.
Also, note that we can have formulae within literals (e.g., $lit_{25}$).

\subsubsection{Cluster 2: \textit{Lift (2,4)}.} This cluster contains 2 variables (both of them belong to the environment) and 4 literals. Concretely:
\[
\begin{array}{l@{\hspace{3em}}l}
lit_{10} = (i_3 \leq 100) \\
lit_{14} = (i_3 > (i_4 - \dfrac{i_4}{10})) \\
lit_{21} = \bigwedge(i_3 \geq 0, i_3 \leq 200) \\
lit_{23} = \bigwedge(i_4 \geq 0, i_4 \leq 200)
\end{array}
\]

Note that, if we would like to enhance speed, $lit_{21}$ and $lit_{23}$ could be conjuncted in a single literal, since they are assumptions of the environment that will always hold together.

\subsubsection{Cluster 3: \textit{Lift (1,3)}.} This cluster contains 1 variable (which belongs to the environment) and 3 literals.
\[
\begin{array}{l@{\hspace{3em}}l}
lit_3 = (i_0 = 2) \\
lit_8 = (i_0 = 4) \\
lit_{15} = \bigwedge(i_0 \geq 0, i_0 \leq 4)
\end{array}
\]

Note that there are literals that use the equality operator, which is more restrictive than the comparison ones.

\subsubsection{Cluster 4: \textit{Lift (1,2)}.} This cluster contains 1 variable (which belongs to the environment) and 2 literals.
\[
\begin{array}{l@{\hspace{3em}}l}
lit_6 = (i_1 \neq 1) \\
lit_{17} = \bigwedge(i_1 \geq 0, i_1 \leq 4)
\end{array}
\]

Note the inequality operator, which is less restrictive than the comparison ones.

\subsection{Industrial case 2: Train}

Train is part of a set of specifications describing the functioning of an autonomous train driving system.
%
%
It is divided into 8 clusters, interpreted in $\mathcal{T}_{\mathbb{R}}$.

\subsubsection{Cluster 1: \textit{Train (1,3)}.} This cluster contains 1 variable (which belongs to the environment) and 3 literals. Concretely:
\[
\begin{array}{l@{\hspace{3em}}l}
    lit_1 = (in_6>1) \\
    lit_2 = (in_6 \leq 0.8 \cdot 1) \\
    lit_3 = \bigwedge(in_6 \geq 0.0, in_6 \leq 100)
\end{array}
\]


\subsubsection{Cluster 2: \textit{Train (2,1)}.} This cluster contains 2 variables (both of them belong to the system) and 1 literal. Concretely:
\[
\begin{array}{l@{\hspace{3em}}l}
    lit_4 = \bigwedge(v_{10}=100, v_6=100) 
\end{array}
\]

Note that $lit_4$ can be split into two clusters.

\subsubsection{Cluster 3: \textit{Train (1,3)}.} This cluster contains 1 variable (which belongs to the system) and 3 literals. Concretely:
\[
\begin{array}{l@{\hspace{3em}}l}
    lit_5 = (v_7=0) \\
    lit_6 = (v_7=100\cdot 1) \\
    lit_7 = (v_7=2.2)
\end{array}
\]

\subsubsection{Cluster 4: \textit{Train (1,1)}.} This cluster contains 1 variable (which belongs to the system) and 1 literal. Concretely:
\[
\begin{array}{l@{\hspace{3em}}l}
    lit_8 = (v_2=1)
\end{array}
\]

\subsubsection{Cluster 5: \textit{Train (3,6)}.} This cluster contains 3 variables (where $v_8$ belongs to the system, and $in_7$ and $in_8$ belong to the environment) and 6 literals. Concretely:
\[
\begin{array}{l@{\hspace{3em}}l}
    lit_{32} = (in_7 \neq 0) \\
    lit_{33} = (v_8=\dfrac{in_8}{in_7} \cdot 2) \\
    lit_{34} = (in_8 \neq 0) \\
    lit_{36} = ((in_8 - in_7) > 20) \\
    lit_{43} = \bigwedge(in_7 \geq 0.0, in_7 \leq 10) \\
    lit_{44} = \bigwedge(in_8 \geq 0.0, in_8 \leq 100)
\end{array}
\]


Again, note that $lit_{43}$ and $lit_{44}$ can be conjuncted in a single literal.

\subsubsection{Cluster 6: \textit{Train (4,5)}.} This cluster contains 4 variables (where $v_4$ and $v_5$ belong to the system, and $in_{10}$ and $in_{11}$ belong to the environment) and 5 literals. Concretely:
\[
\begin{array}{l@{\hspace{3em}}l}
    lit_{39} = (v_4=in_{10}) \\
    lit_{40} = (v_4=1) \\
    lit_{41} = (v_5=in_{11}) \\
    lit_{42} = (v_5=1) \\
    lit_{45} = \bigwedge(\bigwedge(in_{10} \geq 0.0, in_{10} \leq 100), \bigwedge(in_{11} \geq 0.0, in_{11} \leq 100))
\end{array}
\]

Note, in $lit_{45}$, the arbitrarily large formulae about bounds of variables.

\subsubsection{Cluster 7: \textit{Train (3,5)}.} This cluster contains 3 variables (where $v_{12}$ belongs to the system, and $in_1$ and $in_{12}$ belong to the environment) and 5 literals. Concretely:
\[
\begin{array}{l@{\hspace{3em}}l}
    lit_{39} = (v_4=in_{10}) \\
    lit_{40} = (v_4=1) \\
    lit_{41} = (v_5=in_{11}) \\
    lit_{42} = (v_5=1) \\
    lit_{45} = \bigwedge(\bigwedge(in_{10} \geq 0.0, in_{10} \leq 100), \bigwedge(in_{11} \geq 0.0, in_{11} \leq 100))
\end{array}
\]

\subsubsection{Cluster 8: \textit{Train (3,12)}.} This cluster contains 4 variables (all of them belong to the system) and 5 literals. Concretely:
\[
\begin{array}{l@{\hspace{3em}}l}
    lit_{12} = (v_{14} \geq v_{12}) \\
    lit_{13} = (v_{13} = v_{12}) \\
    lit_{14} = (v_{12} > 0) \\
    lit_{15} = (v_{12} < 2.2 \cdot 1.5) \\
    lit_{17} = (v_{15} = 10) \\
    lit_{18} = (v_{13} = v_{15}) \\
    lit_{19} = (v_{13} = 0) \\
    lit_{20} = (v_{12} > v_{15}) \\
    lit_{22} = (v_{14} = v_{13}) \\
    lit_{23} = (v_{12} \geq 0) \\
    lit_{24} = (v_{14} = v_{12}) \\
    lit_{25} = ((v_{13} \cdot 1.2) > v_{12})
\end{array}
\]

\subsection{Industrial case 3: Connect}

Connect is part of a set of specifications describing the functioning of an electric vehicle charging and discharging system.
%
%
It contains a single cluster \textit{Connect (2,2)}, interpreted in $\mathcal{T}_{\mathbb{Z}}$.

The cluster contains 2 variables (both of them belong to the environment) and 2 literals. Concretely:
\[
\begin{array}{l@{\hspace{3em}}l}
lit_1 = (a \leq 100) \\
lit_2 = (a > (\textit{b} - \dfrac{\textit{b}}{10}))
\end{array}
\]

Note that we are performing an integer division.




\subsection{Industrial case 4: Cooker}

Cooker is part of a set of specifications describing the operation of a food processor with various functions.

%
It contains a single cluster \textit{Cooker (3,5)}, interpreted in both $\mathcal{T}_{\mathbb{Z}}$ and $\mathcal{T}_{\mathbb{R}}$.

The cluster contains 3 variables (the three of them belong to the system) and 5 literals, where $b \in \mathbb{R}$. Concretely:
\[
\begin{array}{l@{\hspace{3em}}l}
lit_1 = (a = c+10) \\
lit_2 = (a = c-10) \\
lit_3 = (a = c+1) \\
lit_4 = (a = c-1) \\
lit_5 = (b < a)
\end{array}
\]

Note that, since we are making an (unsound) comparison between an integer value and a real value in $lit_5$, Z3 converts the integer typed value into a real typed one.

\subsection{Industrial case 5: Usb}

Usb is part of a set of specifications describing the operation of a system that prevents the loss of information during the interaction between a USB and a machine.
%
%
It is divided into 2 clusters, interpreted in in $\mathcal{T}_{\mathbb{Z}}$.

\subsubsection{Cluster 1: \textit{Usb (2,3)}.} This cluster contains 2 variables (both of them belong to the system) and 3 literals. Concretely:
\[
\begin{array}{l@{\hspace{3em}}l}
    lit_1 = (a > 100) \\
    lit_2 = (b = a^2) \\
    lit_3 = (b = 0)
\end{array}
\]

Note that, since the \textit{power} operation is only accepted for reals, Z3 again performs a type conversion from integer to reals.

\subsubsection{Cluster 2: \textit{Usb (3,5)}.} This cluster contains 3 variables (the three of them belong to the system) and 5 literals. Concretely:
\[
\begin{array}{l@{\hspace{3em}}l}
    lit_1 = (a > 100) \\
    lit_2 = (b = a^2) \\
    lit_3 = (b = 0) \\
    lit_4 = (c = 0) \\
    lit_5 = (c = 1)
\end{array}
\]

Note that this case is a simple stressing from the previous one, adding an integer variable that could be (in this case) interpreted as a Boolean out of the Boolean abstractions, and also as a cluster itself.

\subsection{Industrial case 6: Stages}

Stages is part of a set of specifications describing the operation of a system that combines the use of different sensors for use in aviation.
%
%
It is divided into 2 clusters, interpreted in in $\mathcal{T}_{\mathbb{Z}}$.

\subsubsection{Cluster 1: \textit{Stage (8,8)}.} This cluster contains 8 variables (all of them belong to the environment) and 8 literals. Concretely:
\[
\begin{array}{l@{\hspace{3em}}l}
    lit_1 = (a > (b + 100)) \\
    lit_2 = (a \leq 200) \\
    lit_3 = (a > (c - \dfrac{\textit{c}}{10})) \\
    lit_4 = (\textit{d} \leq 200) \\
    lit_5 = (\textit{e} > 1) \\
    lit_6 = (\textit{v} > 1) \\
    lit_7 = (\textit{w} > 1) \\
    lit_8 = (\textit{z} > 1)
\end{array}
\]

Note that there are several clusters merged in this one ($lit_1$ to $lit_3$, and the rest are a single cluster each predicate). Since only the environment player appears in them, the Boolean abstraction remains fast.

\subsubsection{Cluster 2: \textit{Stage (3,6)}.} This cluster contains 3 variables (which belong to the environment) and 6 literals. Concretely:
\[
\begin{array}{l@{\hspace{3em}}l}
    lit_9 = (a \geq 100) \\
    lit_10 = (a > 25) \\
    lit_11 = (a=(a - b)) \\
    lit_12 = (a = 25) \\
    lit_13 = (a = a + c) \\
    lit_14 = (a < 50)
\end{array}
\]

\subsection{Synthetic examples}

This specification is different from the rest of them. Here, we stress an original specification in order to test the Boolean abstraction tool. Note that we can interpret literals in both $\mathcal{T}_{\mathbb{Z}}$ and $\mathcal{T}_{\mathbb{R}}$.
The original specification \textit{Syn (2,2)} contains 2 variables (where x belongs to the environment and y belongs to the system) and 2 literals. Concretely:
\[
\begin{array}{l@{\hspace{3em}}l}
l_1 = (y > -2) \\
l_2 = (y < x)
\end{array}
\]

Then, we add a new constraint to make \textit{Syn (2,3)}:
\[
\begin{array}{l@{\hspace{3em}}l}
l_1 = (y > -2) \\
l_2 = (y < x) \\
l_3 = (0 < x)
\end{array}
\]

We add another one to make \textit{Syn (2,4)}:
\[
\begin{array}{l@{\hspace{3em}}l}
l_1 = (y > -2) \\
l_1 = (y > -2) \\
l_2 = (y < x) \\
l_3 = (0 < x) \\
l_4 = (x < 10)
\end{array}
\]

We add another one to make \textit{Syn (2,5)}:
\[
\begin{array}{l@{\hspace{3em}}l}
l_1 = (y > -2) \\
l_2 = (y < x) \\
l_3 = (0 < x) \\
l_4 = (x < 10) \\
l_5 = (x > 5)
\end{array}
\]

We add another one to make \textit{Syn (2,6)}:
\[
\begin{array}{l@{\hspace{3em}}l}
l_1 = (y > -2) \\
l_2 = (y < x) \\
l_3 = (0 < x) \\
l_4 = (x < 10) \\
l_5 = (x > 5) \\
l_6 = (x < 20)
\end{array}
\]

And we add the last one to make \textit{Syn (2,7)}:
\[
\begin{array}{l@{\hspace{3em}}l}
l_1 = (y > -2) \\
l_2 = (y < x) \\
l_3 = (0 < x) \\
l_4 = (x < 10) \\
l_5 = (x > 5) \\
l_6 = (x < 20) \\
l_7 = (x > 15)
\end{array}
\]

Note that \textit{Syn (2,7)} has two executions in Fig.~\ref{tab:benchmark} in order to illustrate results with different heuristics. 
Also, note that semantics of original \textit{Syn (2,2)} may vary with each addition of a constraint, yet they add no new theory-level operators.
This may affect realizability results.

\section{Correctness} \label{appSec:proofs}

The main element of the proof of correctness of our Boolean
abstraction technique is to show that every strategy of system in the
game that corresponds to $\phiT$ can be mimicked by the system in the
game of $\phiB$ in a way that one is winning if and only if the other
is winning.
This essentially boils down to proving that a local move for the
system can be mimicked in both games.



\subsection{Local Simulation}

We start by stating properties of the set of valid reactions.

\begin{lemma}
  \label{lemmReactionExistence}
  For every valuation of $\vs$ of $\xs$ there is at least one reaction $C$
  such that $\react_C[\xs \shortleftarrow \vs]$ is
  valid.
  Therefore,
  $\varphi_{React} = \forall \xs .  \bigvee_{r \in React } r$
  is valid.
\end{lemma}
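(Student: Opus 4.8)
The plan is to fix an arbitrary valuation $\vs$ of $\xs$ and to \emph{construct explicitly} a single reaction that is true at $\vs$, guided by which choices the system can still realize once the environment has committed to $\vs$. Concretely, I would set
\[
  P = \{\, c \in \calC \mid \PT{c}[\xs \shortleftarrow \vs] \text{ holds} \,\}
  \qquad
  A = \{\, c \in \calC \mid \NT{c}[\xs \shortleftarrow \vs] \text{ holds} \,\},
\]
so that $P$ collects exactly those choices for which some $\ys$ makes $f(c(\vs,\ys))$ true, and $A$ collects the remaining choices.

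The first step, and the crux of the argument, is to verify that $(P,A)$ is genuinely a reaction, i.e.\ a partition of $\calC$ rather than merely a quasi-reaction. This reduces to the observation that, for a fixed $\vs$, the potential and the antipotential of a choice $c$ are exact complements: $\NT{c}[\xs \shortleftarrow \vs] = \forall\ys.\neg f(c(\vs,\ys)) = \neg\bigl(\exists\ys.f(c(\vs,\ys))\bigr) = \neg\,\PT{c}[\xs\shortleftarrow\vs]$. By the law of excluded middle (reading the quantifiers classically), for every $c\in\calC$ exactly one of $\PT{c}[\xs\shortleftarrow\vs]$ and $\NT{c}[\xs\shortleftarrow\vs]$ holds. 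Hence $P\cap A=\emptyset$ and $P\cup A=\calC$, so $(P,A)\in\calR$.

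The second step is then immediate from the construction: every conjunct of
\[
  \react_{(P,A)}[\xs\shortleftarrow\vs] = \bigwedge_{c\in P}\PT{c}[\xs\shortleftarrow\vs] \wedge \bigwedge_{c\in A}\NT{c}[\xs\shortleftarrow\vs]
\]
holds by the very membership conditions defining $P$ and $A$, so the whole formula is true at $\vs$. This establishes the first claim of the lemma. Moreover it shows that $\exists\xs.\react_{(P,A)}(\xs)$ holds, witnessed by $\vs$, so the constructed $(P,A)$ is in fact a \emph{valid} reaction and therefore lies in $\VR$ (and a fortiori in the full set of reactions indexing the disjunction).

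Finally, for the ``Therefore'' part I would unfold the meaning of validity of a universally quantified disjunction: $\forall\xs.\bigvee_{r} r$ is valid iff for every valuation $\vs$ the disjunction $\bigvee_{r} r[\xs\shortleftarrow\vs]$ is true, iff for every $\vs$ at least one disjunct $r[\xs\shortleftarrow\vs]$ is true. The construction above exhibits exactly such a disjunct for each $\vs$, so the disjunction is valid. I do not anticipate a genuine obstacle: the entire argument is a routine partition construction, and the only point demanding care is the two-valued reading of the quantifiers that makes $\PT{c}$ and $\NT{c}$ complementary at a fixed $\vs$, which is precisely what guarantees that $(P,A)$ is a total partition and not a strict quasi-reaction.
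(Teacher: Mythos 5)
Your proposal is correct and follows essentially the same route as the paper: the paper also fixes an arbitrary valuation $\vs$ and takes the reaction whose potentials are exactly the choices $c$ with $\exists\ys.\,f(c(\vs,\ys))$ true, the rest being antipotentials. Your write-up merely makes explicit two points the paper leaves implicit --- that $\NT{c}$ is the literal negation of $\PT{c}$ at fixed $\vs$ (so the construction yields a genuine partition, not a quasi-reaction) and that the constructed reaction is moreover valid --- which is a faithful, slightly more detailed rendering of the same argument.
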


\begin{proof}
  Let $\vs$ be an arbitrary valuation of the variables $\xs$
  and let
  $C=\{ c \in \calC | I[\xs \shortleftarrow \vs]
  \vDash \exists \ys. c(\xs,\ys)) \}$ where $I$ is an arbitrary
  interpretation.
  It follows that
  $I[\xs\shortleftarrow\vs] \vDash react_C$, since
  for every $c \in C$ then
  $I[\xs \shortleftarrow \vs] \vDash c$ and for any
  $c \notin C$ then
  $I[\xs \shortleftarrow \vs] \nvDash c$.
\end{proof}

The following lemma shows that there is an always valid move of the
system (the extra requirement is never blocking).
For every movement of the environment, the system can move at least
with one of the reactions.

\begin{theorem} \label{theoForEveryMovement}
  $\varphi_{\textit{VR}} = \forall \xs .  \bigvee_{r \in \textit{VR} }
  r$ is a valid formula.
\end{theorem}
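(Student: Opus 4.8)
The plan is to reduce this statement to Lemma~\ref{lemmReactionExistence} by observing that the reaction produced there for a given valuation is in fact already a \emph{valid} reaction, so that the disjunction over $React$ can be replaced by a disjunction over $\textit{VR}$. First I would fix an arbitrary valuation $\vs$ of the environment variables $\xs$ and apply Lemma~\ref{lemmReactionExistence} to it. This yields a reaction $C$ (with potentials $C$ and antipotentials $\calC\setminus C$) such that $\react_C[\xs\shortleftarrow\vs]$ holds.

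The key observation is then that this $C$ belongs to $\textit{VR}$. Recall that a reaction $r$ is \emph{valid} precisely when $\exists\xs.\,r(\xs)$ is a valid formula. But the valuation $\vs$ itself serves as the existential witness: since $\react_C[\xs\shortleftarrow\vs]$ holds, the formula $\exists\xs.\,\react_C(\xs)$ is valid, and hence $C\in\textit{VR}$. Consequently $\vs$ satisfies the disjunction $\bigvee_{r\in\textit{VR}}r$, because the disjunct corresponding to $C$ is true at $\vs$.

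Since $\vs$ was an arbitrary valuation of $\xs$, every valuation satisfies $\bigvee_{r\in\textit{VR}}r$, which is exactly the claim that $\varphi_{\textit{VR}}=\forall\xs.\bigvee_{r\in\textit{VR}}r$ is valid.

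There is essentially no hard computational step here; the proof is a one-line strengthening of the previous lemma. The only subtlety worth stating carefully is the interplay between the two quantifications over $\xs$. The outer universal quantifier in $\varphi_{\textit{VR}}$ is discharged by fixing the arbitrary $\vs$, while the existential quantifier hidden in the \emph{definition} of a reaction being valid is discharged by that \emph{same} $\vs$. Making this coincidence explicit is precisely what upgrades the ``some reaction'' of Lemma~\ref{lemmReactionExistence} to ``some \emph{valid} reaction'', replacing $React$ by $\textit{VR}$ in the disjunction, and I expect this to be the point requiring the most care in the write-up.
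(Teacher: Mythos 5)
Your proof is correct and takes essentially the same route as the paper: both arguments rest on Lemma~\ref{lemmReactionExistence} combined with the key observation that a reaction satisfied at a particular valuation $\vs$ is automatically valid (since $\vs$ itself witnesses $\exists\xs.\,\react_C(\xs)$) and hence lies in $\textit{VR}$. The only difference is presentational: the paper wraps this argument in a proof by contradiction, whereas your direct version states the same content more cleanly.
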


\begin{proof}
By contradiction, assume $\varphi_{\textit{VR}} = \forall \xs .  \bigvee_{r \in \textit{VR} } r$ is not valid. 
Then, there is an interpretation $I$ such that
$I \not\vDash \varphi_{\textit{VR}}$, or equivalently
$I[\xs \shortleftarrow \vs] \nvDash \bigwedge_{r \in \textit{VR}}r$,
for some $\vs$.
By Lemma \ref{lemmReactionExistence}, we know $\varphi_{React}$ is
valid, so
$I[\xs \shortleftarrow \vs] \vDash \bigwedge_{r \in \textit{React}}r$,
for some $\vs$.
That is, there is a reaction $r$ such that $r[\xs \shortleftarrow v]$
is valid. Therefore, $\exists \xs. r$ is valid, so $r\in \textit{VR}$. This
means: $I[\xs \shortleftarrow v] \vDash r $.

It follows that there is a $r \in React \setminus \textit{VR}$ such that $I[\vs \shortleftarrow \xs] \vDash r$, which implies that $I \vDash \exists \xs .r$.
Since $I \vDash \exists \xs .r$ is closed,
$I \vDash \exists \xs .r$ is valid.
This is a contradiction.
%
\end{proof}

As an observation, in the extra requirement, the set of potentials in
valid reactions cannot be empty.
In other words, for every move of the environment the system can
always move with a valid reaction, which will result in the
always-existence of some outcome.
This is stated in Lemma \ref{lemmNotEmptyReactions}.

\begin{lemma} \label{lemmNotEmptyReactions} Let $C \in \mathcal{C}$ be
  such that $react_C \in \textit{VR}$. Then $C \neq \emptyset$.
\end{lemma}

\begin{proof}
  Bear in mind $\react_C \in \textit{VR}$ is valid. 
  Let $\vs$ be such that $\react_C[\xs \shortleftarrow \vs]$ is
  valid.
  Let $\ws$ be an arbitrary valuation of $\ys$ and let $c$ be a
  configuration and $l$ a literal.
  Therefore:
  \[\bigwedge_{l[\xs \shortleftarrow \vs, \ys \shortleftarrow \ws]\textit{ is true }}l \wedge \bigwedge_{l[\xs \shortleftarrow \vs, \ys \shortleftarrow \ws]\textit{ is false }}\neg l\]
  It follows that $I[\xs \leftarrow \vs] \exists \ys . c$, so $c\in C$.
  %
\end{proof}

Lemma~\ref{lemmNotEmptyReactions} is crucial, because it ensures that
once a Boolean abstraction algorithm is executed, for each fresh $\overline{e}$ variable
in the extra requirement, at least one reaction with one or more
potentials can be responded by the system.

\subsection{From Local Simulation to Equi-Realizability} 

Realizability from LTL specifications considers infinite games.
The positions in the arena of the game are valuations of the atomic
propositions.
The two players take turns choosing alternatively the values of their
variables, resulting in a new position.
Then, an infinite play is winning for the system if the specification
is satisfied in the induced trace of the played, when the
specification formula is evaluated according to the semantics of the
logic.
A strategy of the system is a map that assigns a move, given the
previous sets of positions and the current move of the environment.
A strategy is winning if all plays played according to it are winning
for the system, in which case the specification is realizable and a
system can be extracted.
Note that, for $\phiB$ in particular, a winning system strategy always
moves to positions where the extra requirement is true (otherwise the
$\phiB$ would not hold and the strategy would not be winning.

For $\LTLt$ realizability, the arena has infinitely many position,
since valuations of the variables are now considered.

It is easy to see that the results in the previous sub-section allow
to define a simulation between the positions of the games for $\phiT$
and $\phiB$ such that (1) each literal $l_i$ and the corresponding
variable $s_i$ have the same truth value in related positions, (2) the
extra requirement is always satisfied, and (3) moves of the system in
each game from related positions in each game can be mimicked in the other
game.

\begin{theorem} 
  \label{temporalTheorem} 
   System wins $\GameT$ if and only if System wins the game $\GameB$.
  Therefore, $\phiT$ is realizable if and only if $\phiB$ is
  realizable.
\end{theorem}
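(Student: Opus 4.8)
The plan is to establish the equivalence of the two game-winning statements by constructing a simulation between the arenas of $\GameT$ and $\GameB$ and transferring winning strategies across it in both directions; the realizability equivalence then follows from the game/realizability correspondence of \cite{gradelETAL2002automataLogicsInfiniteGames}. First I would note that both games carry the \emph{same} $\omega$-regular winning condition: it is obtained from one formula skeleton, evaluated on the literals $l_i$ in $\GameT$ and on the matching variables $s_i$ in $\GameB$, and it factors through the finite alphabet of literal valuations together with the shared automaton state that tracks the temporal formula. Hence both games are positionally determined, and it suffices to reason about a single round between two \emph{related} positions and to build the transferred strategies positionally.

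I would then define the simulation $\mathcal{S}$ to relate a position of $\GameT$ (a valuation of $\VE\cup\VS$ with its automaton component) to a position of $\GameB$ (a valuation of the $s_i$ and of the environment variables $\overline{e}$) whenever the induced literal valuation agrees, i.e., $l_i$ is true on the left exactly when $s_i$ is true on the right, and the automaton states coincide. Property (1) is then immediate by construction, and the crux is to prove the move-mimicking property (3) while maintaining the invariant (2) that $\phiEx$ holds. For the direction $\GameB\to\GameT$: an environment move $\vs$ for $\xs$ in $\GameT$ induces a unique set of realizable \configurations $C_\vs$, hence a unique valid reaction $r=(C_\vs,\calC\setminus C_\vs)$ (valid since $\vs$ witnesses it, using Lemma~\ref{lemmReactionExistence} and Theorem~\ref{theoForEveryMovement}); I let the matching environment move in $\GameB$ select $e_r$. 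A winning $\rhoB$ must then answer with an $s_i$-valuation satisfying $\phiEx$, so it agrees with some \configuration $c\in C_\vs$, and since $c$ is realizable the system can answer in $\GameT$ with $\ws$ making exactly the literals of $c$ true; Lemma~\ref{lemmNotEmptyReactions} guarantees $C_\vs\neq\emptyset$, so such a response always exists. For the direction $\GameT\to\GameB$: a legal environment move in $\GameB$ (one $e_r$ set, as enforced by the assumption $A_\Bool$) fixes a valid reaction $r=(P,A)$; validity yields a witness move $\vs$ in $\GameT$ whose power is exactly $(P,A)$, and a winning $\rhoT$ answers with some $\ws$ realizing a \configuration $c$, which must lie in $P$; setting the $s_i$ according to $c$ satisfies $\phiEx$ (the conjunct for $e_r$ by $c\in P$, the others vacuously).

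Having (1)--(3), I would conclude by trace equality: related positions agree on the literal/variable valuations at every step, so a play and its mimicked counterpart produce the same trace over the literal alphabet. Therefore $\varphi$ holds on the $\GameT$-trace iff $\varphi[l_i\leftarrow s_i]$ holds on the $\GameB$-trace, and since the invariant keeps $\phiEx$ true throughout, $\phiB=\varphi''\wedge\Always(A_\Bool\Impl\phiEx)$ holds on the right exactly when $\varphi$ holds on the left. Consequently a positional winning $\rhoB$ transfers to a winning $\rhoT$ and, symmetrically, a winning $\rhoT$ transfers to a winning $\rhoB$, giving ``System wins $\GameT$ iff System wins $\GameB$''; the realizability equivalence is then the cited correspondence.

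The main obstacle I expect is property (3) in the $\GameB\to\GameT$ direction together with the proper treatment of $A_\Bool$. The delicate point is that a winning $\rhoB$ is only known to produce $s_i$-valuations satisfying $\phiEx$, and one must argue that every such Boolean response corresponds to a \configuration the system can \emph{actually} realize with concrete theory values --- this is exactly what validity of reactions and the non-emptiness Lemma~\ref{lemmNotEmptyReactions} supply, so the extra requirement is never blocking. A secondary subtlety is that the environment in $\GameB$ may play moves violating $A_\Bool$; since such moves only relax the guarantee (they make $\phiEx$ vacuous) they are dominated and never help the environment, so restricting attention to legal moves --- where each environment move corresponds to exactly one valid reaction --- loses no generality for deciding who wins. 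I would also take care that positional determinacy is applied to the $\omega$-regular condition factored through the finite literal alphabet, so that it is available despite the infinite arena of $\GameT$.
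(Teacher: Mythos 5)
Your proposal is correct and takes essentially the same route as the paper's proof: a simulation relation between positions of $\GameT$ and $\GameB$ satisfying properties (1)--(3), through which (memoryless/positional) winning strategies are transferred in both directions so that related plays yield identical traces, with equi-realizability then following from the game correspondence. You in fact supply details the paper leaves as ``easy to see'' --- the explicit use of Lemma~\ref{lemmReactionExistence}, Theorem~\ref{theoForEveryMovement} and Lemma~\ref{lemmNotEmptyReactions} to establish the move-mimicking property, and the observation that environment moves violating $A_{\mathbb{B}}$ are dominated --- so your argument is a faithful, more detailed rendering of the same proof.
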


\begin{proof}
Since realizability games are memory-less determined, it is sufficient
to consider only local strategies.
Given a  strategy $\rhoB$ that is winning in $\GameB$ we define
a strategy $\rhoT$ in $\GameT$ as follows.
Assuming related positions, $\rhoT$ moves in $\GameT$ to the successor
that is related to the position where $\rhoB$ moves in $\GameB$.
By (3) above, it follows that for every play played in $\GameB$
according to $\rhoB$ there is a play in $\GameT$ played according to
$\rhoT$ that results in the same trace, and vice-versa: for every play
played in $\GameT$ according to $\rhoT$ there is a play in $\GameB$
played according to $\rhoB$ that results in the same trace.
Since $\rhoB$ is winning, so is $\rhoT$.

The other direction follows similarly, because again $\rhoB$ can be
constructed from $\rhoT$ not only guaranteeing the same valuation of
literals and corresponding variables, but also that the extra
requirement holds in the resulting position.
\qed
\end{proof}


\end{document}